\newenvironment{keywords}{
       \list{}{\advance\topsep by0.35cm\relax\small
       \leftmargin=1cm
       \labelwidth=0.35cm
       \listparindent=0.35cm
       \itemindent\listparindent
       \rightmargin\leftmargin}\item[\hskip\labelsep
                                     \bfseries Keywords:]}
     {\endlist}
\newtheorem{Prop}{Proposition}
\newtheorem{Lem}{Lemma}
\newtheorem{Def}{Definition}
\newenvironment{Defprime}[1]
  {\renewcommand{\theDef}{\ref*{#1}$'$}%
   \addtocounter{Def}{-1}%
   \begin{Def}}
  {\end{Def}}
\let\epsilon\varepsilon
\newcommand\abs[1]{\left|#1 \right|}
\newcommand\Brace[1]{\left\{\begin{aligned}#1\end{aligned}\right.}
\newcommand\p{\partial}
\newcommand\kh[1]{\left(#1\right)}
\newcommand\fkh[1]{\left[#1\right]}
\newcommand\hkh[1]{\left\{#1\right\}}
\newcommand\eqn[1]{\begin{align}#1\end{align}}
\newcommand\eqns[1]{\begin{align*}#1\end{align*}}
\newcommand\indic[1]{\bm{1}\left(#1\right)}
\begin{document}
\setstretch{1.5}
\begin{titlepage}
\title{\textbf{Motivating Effort with Information about Future Rewards}}
\author{Chang Liu\thanks{School of Economics, UNSW Business School; \href{mailto:chang.liu36@unsw.edu.au}{\nolinkurl{chang.liu36@unsw.edu.au}}. I gratefully acknowledge funding from the National Science Foundation under grant DMS-1928930 and from the Alfred P. Sloan Foundation under grant G-2021-16778 during the Fall 2023 semester. I express my deep gratitude to Benjamin Golub and Shengwu Li for their guidance and support throughout the project. 
An earlier version of this paper appeared as the second chapter of my Ph.D. dissertation at Harvard.  
I would also like to thank Yeon-Koo Che, Federico Echenique, Jeffrey Ely, Yannai Gonczarowski, Zo{\"e} Hitzig, Fran\c{c}ois-Xavier Ladant, David Laibson, Annie Liang, Ziqi Lu, John Macke, Erik Madsen, Matthew Rabin, Brit Sharoni, Philipp Strack, Haoqi Tong, Kai Hao Yang, Weijie Zhong, and especially Eric Maskin and Tomasz Strzalecki for valuable comments and discussion. In addition, I thank multiple anonymous referees at ACM EC'22, as well as the audiences at ACM EC'22 and the Stony Brook International Conference on Game Theory for their insightful feedback. 
}}
\date{\begin{tabular}{ rl } 
First version:& October 8, 2021\\
This version:& December 28, 2025
\end{tabular}}
\maketitle
\thispagestyle{empty}
\begin{abstract}
This paper studies the optimal mechanism to motivate effort in a dynamic principal-agent model without transfers. An agent is engaged in a task with uncertain future rewards and can quit at any time. The principal knows the reward and provides information over time to motivate effort. We provide a unified framework and derive the optimal information policy in closed form across stationary and nonstationary environments. Within this framework, we identify two precise conditions, each of which guarantees that dynamic disclosure is strictly valuable. First, if the principal is \textit{impatient} compared to the agent, she prefers the front-loaded effort schedule induced by dynamic disclosure; in a stationary environment, dynamic disclosure is beneficial if and only if the principal is less patient. Second, in an environment where the agent would become \textit{pessimistic} over time absent any disclosure, dynamic information provision can counteract this downward trend and encourage longer effort. Notably, patience remains a crucial determinant of the structure of the optimal policy.
\end{abstract}

\begin{keywords}
Dynamic information design, delayed disclosure, informational incentives
\end{keywords}
\end{titlepage}

\setstretch{1.5}
\newpage\setcounter{page}{2}

\section{Introduction}
In long-term relationships, controlling information is a powerful tool for incentivizing behavior. This paper studies the optimal mechanism to motivate effort in a dynamic principal-agent problem without transfers, focusing on a specific question: When can the principal benefit from revealing information over time rather than all at once?

For instance, consider the relationship between a principal investigator (PI, she) and a postdoctoral researcher (postdoc, he) in a research laboratory. The monetary compensation in a postdoctoral contract usually does not include bonuses for good research results, but the postdoc does build a reputation  that is beneficial to his future career path by achieving major breakthroughs in projects. The PI has worked longer in academia, so she has more information about the prospects and rewards of particular research projects. By revealing information to the postdoc, the PI can influence his attitude towards the project and indirectly motivate his effort. 

How should the PI release information to motivate the postdoc's effort? In particular, is it optimal for her to provide an initial disclosure and then keep silent, or can she benefit from \textit{dynamic disclosure}?  If dynamic disclosure is beneficial, is it better to delay all disclosures to a certain date, or to gradually provide small amounts of information over time? What determines the form of the optimal information policy?\footnote{Our model also applies naturally to master-apprentice relationships (where rewards represent the apprentice's future job prospect) and corporate hierarchies (where rewards correspond to promotions or contract renewals). Moreover, dynamic disclosure can be interpreted as the strategic timing of information-generating events, such as scheduling experiments in scientific research and external committee reviews in business settings.}

We study these questions in a dynamic principal-agent model without transfers (Section \ref{sec:model}). An agent works for a principal on a task and exerts effort continuously until making an \textit{irreversible} decision to quit. The principal aims to maintain the agent's effort, while the agent faces a trade-off between the reward from completing the task and a flow cost of exerting effort. This future reward, which we call the task \textit{quality}, is initially only known by the principal. Task completion is stochastic and arrives according to a Poisson process, and the rate of task completion may depend on the task quality. Both players observe when the task is completed, at which point the relationship ends. The principal can commit to an arbitrary dynamic information policy, specifying how and when to disclose information to the agent about the quality of the task. The agent knows the policy and understands the principal's commitment, observes the realized disclosures, rationally updates his belief about the task quality, and at each moment chooses whether to continue working or to quit in order to maximize his expected payoff.

Our main results provide closed-form characterizations of the optimal information policy and find that  dynamic disclosure is not always necessary: in some environments, a well-chosen static policy is optimal for the principal. However, we identify two precise conditions under which dynamic information disclosure is strictly valuable: one is that the principal is \textit{impatient} compared with the agent, and the other is that the agent would become \textit{pessimistic} over time absent any information disclosure.

We begin with the baseline version of the problem, where the task quality is \textit{binary} and the  environment is \textit{stationary} (Section \ref{sec:bin}). The rate of task completion does not depend on its quality, so the mere passage of time is not informative. We show that the principal benefits from dynamic disclosure if and only if she is \textit{less patient} than the agent (the first part of Proposition \ref{prop:opt}). To see the intuition behind this result, fix some static disclosure policy, and suppose that the principal introduces a short delay before sending any messages. To maintain its persuasive effectiveness, the delayed message must give the agent greater confidence that the task quality is high; without this additional optimism, the agent will find it not worthwhile to wait for disclosure while actively working and incurring flow costs. Overall, the gain from this dynamic adjustment to the static policy is that the agent always starts working until the time of disclosure regardless of the task quality, while the loss is that the agent is more likely to quit from the low-quality task upon disclosure. Because the gain occurs earlier than the loss, the principal prefers this front-loaded effort schedule when she is less patient than the agent.

Moreover, in the binary-stationary environment, the optimal information policy for an impatient principal
takes the form of a \textit{maximally delayed disclosure}: the principal delays all disclosures up to the maximum time threshold and then fully discloses the task quality (the second part of Proposition \ref{prop:opt}). Note that an arbitrary information policy induces a lottery over the agent's quitting time, and this characterization emerges from analyzing how the principal and agent differ in their preferences over such time lotteries. Their incentives align for high quality tasks \textendash{} the optimal policy never induces quitting in this case \textendash{} but diverge for low-quality tasks. When the task quality is low, exponential discounting implies that the principal is risk-averse over time lotteries, while the agent is risk-seeking (since the agent's value is multiplied by a negative constant). The principal's problem reduces to choosing a time lottery that minimizes the loss from quitting (relative to eventual task completion), subject to incentive compatibility constraints. The structure of the optimal time lottery is determined by comparing the risk attitudes of the two parties, which under exponential discounting simplifies to comparing their discount rates. When the principal is less patient than the agent, her greater risk aversion leads her to prefer contracting the time lottery while maintaining the agent's value. This drives the optimal lottery toward minimal risk \textendash{} becoming degenerate at a specific time threshold. Such a lottery corresponds to delayed full disclosure of the task quality. Furthermore, as long as the agent's ex ante value remains strictly positive, the principal benefits from further delay, ultimately leading to maximally delayed disclosure as the optimal policy. 

Next, we analyze the optimal policy when the task quality takes more than two values (Section \ref{sec:gen}), assuming the environment remains stationary and the distribution of task quality has a full-support density. We show that, just as in the binary-stationary case, the principal benefits from dynamic disclosure if and only if she is less patient than the agent. However, the optimal information policy now exhibits richer structure compared with maximally delayed disclosure. It involves a \textit{cutoff cascade} \textendash{} a series of time-varying thresholds where at each instant, the principal reveals only whether the task quality exceeds the current cutoff. Depending on model parameters, this takes one of two forms:  \emph{immediate and gradual disclosure}, which combines immediate information release with gradual disclosure over time, or \emph{delayed gradual disclosure}, which delays all disclosure until a specific time before beginning gradual revelation. Unlike the binary case, even when delay is optimal, the policy involves gradual rather than one-shot disclosure, as the principal now benefits from  customizing disclosure times for different quality realizations. This cascade structure demonstrates the principal's fundamental trade-off between sorting and risk-sharing. On the one hand, efficiency demands sorting effort according to task quality. Earlier disclosure achieves better sorting \textendash{} the agent quits low-quality tasks and continues only with promising ones. On the other hand, if the principal withholds information, the agent remains uninformed and continues working across all quality levels, including low-quality tasks, thereby bearing the risk of wasted effort. Such opacity may increase overall effort provision, but it does so by shifting the risk onto the agent, who may work longer on a low-quality task without knowing it. The optimal information policy balances these two competing forces.

Finally, we move on to nonstationary environments (Section \ref{sec:Poi}) by allowing the
task completion rate to vary with the task quality, assuming the task quality is {binary}. The analysis boils down to two cases, depending on whether the completion rate of the high-quality task is larger than that of the low-quality task. Notably, if the high-quality task has a \emph{shorter} expected duration than the low-quality task, then a new incentive to delay disclosure arises. Here, ``no news is bad news'': the agent becomes  \emph{pessimistic} over time in absence of any information disclosure because it becomes more likely that the task has low quality.  In this pessimistic case, we show that the principal always benefits from dynamic disclosure, \emph{regardless of whether she is more patient than the agent}.\footnote{Conversely, in the case where the agent becomes  \textit{optimistic} over time, the results from the stationary environment directly apply.  If the agent starts working, he will never find it optimal to quit early, so the optimistic case reduces to an essentially stationary problem.} Intuitively, providing information over time can counteract this downward trend in the agent's belief and motivate the agent to work longer. However, we find that the level of patience is still a crucial factor in determining the structure of the optimal information policy (Proposition \ref{prop:optlearn}). Although maximally delayed disclosure remains optimal for an impatient principal, the optimal information policy for a patient principal involves \emph{Poisson disclosure}, where full disclosure arrives at a calibrated Poisson rate in order to restore the stationarity of the environment, i.e., to ensure that ``no news is no news''.

To summarize, our analysis identifies two distinct channels through which dynamic disclosure becomes strictly valuable: (i) relative impatience of the principal, which makes front-loaded effort schedules attractive, and (ii) a pessimistic environment, where dynamic disclosure counteracts the agent's declining beliefs. Table~\ref{tab:summary} organizes the optimal information policies derived in this paper, organized by the quality distribution (binary versus general) and the environment (stationary versus nonstationary).
\begin{table}[!htbp]
\centering
\begin{tabular}{|c|c|c|}
\hline
 & \textbf{Stationary}& \textbf{Nonstationary (Pessimistic)}  \\
\hline
\textbf{Binary} & $\begin{array}{c}\text{Static (if patient)}\\\text{Maximally delayed (if impatient)}\end{array}$ (Section~\ref{sec:bin}) & $\begin{array}{c}\text{Poisson (if patient)}\\\text{Maximally delayed (if impatient)}\end{array}$  (Section~\ref{sec:Poi}) \\
\hline
\textbf{General} &  $\begin{array}{c}\text{Static (if patient)}\\\text{Gradual (if impatient)}\end{array}$ (Section~\ref{sec:gen}) & \textit{Open problem} \\
\hline
\end{tabular}
\caption{Summary of optimal information policies across environments.}
\label{tab:summary}
\end{table}
The general nonstationary case remains an open problem for future research.

Beyond identifying when dynamic disclosure is strictly valuable, our framework provides a complete characterization of optimal policies, revealing how the solution structure undergoes phase transitions as underlying parameters change. By varying players' discount rates, the reward distribution, or task completion rates, the optimal policy shifts between distinct patterns that connect to existing literature. In some regions, the solution echoes \citet{ES20} through maximally delayed disclosure that front-loads effort. In others, it resembles \citet{SmolinFC} through gradual, state-contingent revelation managing the trade-off between sorting efficiency and risk allocation. While these earlier contributions each feature one type of mechanism, our unified framework shows that these emerge as different phases of a single model, with boundaries determined by threshold conditions on patience and the environment's informational properties. 

What drives these phase transitions is which incentive constraints bind at optimality. In stationary environments, only the agent's ex ante individual rationality constraint binds at optimality, and the players' relative patience determines whether the principal uses static or dynamic disclosure through risk-shifting considerations arising from exponential discounting. In nonstationary environments, the challenge becomes identifying which of the agent's many incentive constraints across different times must bind. In the pessimistic case, the incentive constraint at the moment the agent's belief decays to the quitting threshold becomes crucial (our Free-Riding Lemma~\ref{lem:free}), and multiple dynamic obedience constraints may bind simultaneously. This shift in binding constraints \textendash{} from static individual rationality to dynamic obedience \textendash{} explains why the optimal policies range from simple delay to complex gradual rules.

On the technical side, we develop a unified proof approach that handles both stationary and nonstationary environments. Throughout, we employ a change-of-variables technique that transforms the principal's problem into an optimization over time lotteries, where comparing risk attitudes under exponential discounting yields closed-form solutions.\footnote{We thank an anonymous referee for suggesting this change of variables for the stationary case.} For stationary environments, this works cleanly because dynamic incentive constraints reduce to a single ex ante condition. Our main technical contribution is extending this approach to nonstationary environments, where the natural drift of beliefs introduces a continuum of interim constraints that must be handled simultaneously.

\subsection{Related Literature}
The literature on information design and Bayesian persuasion, to which this paper contributes, was pioneered by the work of \citet{AMS95} and \citet{KG11}. Within this broad category, there is a particularly relevant literature on dynamic information design, which focuses on the role of information control in dynamic environments in terms of motivating and coordinating actions.

Our model is closest to \citet{ES20}, which also studies the optimal information policy for a principal to motivate effort in a dynamic moral hazard framework without transfers. In \citet{ES20}, the asymmetric information is about the duration of the task: the principal can conceal task completion, strategically keeping the agent working even after the task is complete. Because the principal can always exploit this hidden information, dynamic disclosure is universally optimal. Our model, by contrast, assumes both parties observe task completion, precluding such exploitation. Dynamic disclosure then need not outperform static disclosure, and our unified framework characterizes precisely when it does: if and only if one of the two channels we identify is present. 

A second closely related paper is \citet{SmolinFC}, which studies optimal performance evaluation design in a career-concerns model where an agent of uncertain productivity decides when to quit based on feedback. In \citet{SmolinFC}'s setting, the agent learns from evaluations derived from stochastic performance signals, and the principal designs how these signals are aggregated into evaluations. The equilibrium policy produces a deterministic and typically increasing wage-tenure profile through sorting: agents who remain are increasingly likely to be high types. When dynamic disclosure is optimal in our general stationary case, the policy structure is similar \textendash{} a cascade of increasing cutoffs over time \textendash{} but the underlying mechanism differs. In our setting, the structure stems not only from sorting but also from front-loading effort, which benefits only an impatient principal.

This paper connects to \citet{Au15} and \citet{GS18}, as it identifies conditions under which the optimal disclosure plan in a dynamic framework is static. However, their setting involves dynamic persuasion of a privately informed receiver, where the sender reveals information sequentially to screen the receiver's type. Also related are \citet{Ely17}, \citet{RSV17} and \citet{Ball19}; in their settings, the principal privately observes a state that evolves according to some exogenous stochastic process, and the structure of the optimal information policy depends on this process. By contrast, the model in this paper characterizes the timing and form of information disclosure when the state is fixed. Moreover, in both  \citet{Ely17} and \citet{RSV17} the agent acts myopically, whereas this paper studies the dynamic relationship between two long-lived players.  

Several other recent papers study related topics but differ in their assumptions about what information the principal can control. In \citet{Kaya20}, the principal can commit within each period to any signal structure but cannot commit over time to a sequence of signals.\footnote{The commitment assumption along this style also appears in \citet{OSZ20}, \citet{BRV21}, and \citet{ES21}.} Thus, the relevant solution concept is perfect Bayesian equilibrium. In our model, the principal has full commitment power over arbitrary signal processes, yet our result that a static policy can be dynamically optimal indicates that this assumption is stronger than necessary in some cases. \citet{OSZ20} and \citet{BRV21} assume instead that the principal is not the only source of information: in their models, there is an exogenous flow of payoff-relevant information beyond the principal's control. Furthermore, \citet{CKM21} introduce costs of generating and processing information, and obtain a version of the folk theorem as these costs vanishes.\footnote{Recent working papers, circulated after earlier versions of the present work, explore related themes in dynamic information design. For example, \cite{KSZ24} develop a unified framework for persuasion with optimal stopping, establishing a revelation principle under full commitment and characterizing implementable outcomes without intertemporal commitment. Our focus on the conditions under which dynamic disclosure is strictly optimal is complementary.}
\\ \\ 
\indent The rest of the paper is organized as follows. Section \ref{sec:model} lays out the model. The first main part, Section \ref{sec:bin}, analyzes the binary-stationary case and derives one condition that makes dynamic disclosure valuable: the principal is \textit{impatient} compared with the agent. The second main part, Section \ref{sec:gen}, extends the analysis to general reward distributions and reinforces the results from the binary case. The third main part, Section \ref{sec:Poi}, analyzes the nonstationary environments and reveals another channel through which dynamic disclosure becomes valuable: the agent becomes \textit{pessimistic} over time without any information disclosure. Section \ref{sec:concln} concludes. Appendix \ref{app:proof} contains the proofs of all results in the main text.

\section{Model}\label{sec:model}
The model is a continuous-time principal-agent problem without transfers. An agent works for a principal and exerts effort until he chooses to quit. We assume that the quit decision is \textit{irreversible}. The agent exerts effort at flow cost $c>0$, earns a reward $q\ge 0$ if he works until the task is completed, and discounts at rate $r>0$. We refer to the reward, $q$, as the quality of the task. The realized quality $q$ is the principal's private information, and the agent's prior is given by a distribution $F$ with finite first moment: $\mathbb{E}_F\fkh{q}<\infty$.

Assume that both players observe when the task is completed, and the completion arrives with Poisson rate $\lambda_q>0$ when the task quality is $q$, at which point the relationship ends. This implies that when the task quality is $q$, the task duration $x$ is an exponential random variable with mean $1/\lambda_q$. Thus, given realized duration $x$, the agent's ex post payoff from exerting effort until time $\tau$ is
$$e^{-r\tau}q\cdot \indic{\tau\ge x}-\int_0^\tau e^{-rt}c\,dt=e^{-r\tau}q\cdot \indic{\tau\ge x}-\frac{c}{r}\left(1-e^{-r \tau}\right).$$
The principal receives a bonus $b\kh{q}>0$ if the agent works until the task is completed, and discounts at rate $r_p>0$; therefore, her ex post payoff when the agent quits at $\tau$ is $$e^{-r_p\tau}b\kh{q}\cdot \indic{\tau\ge x}.$$
We assume $b(\cdot)$ is a bounded measurable function with $0<\underline{b} \leq b(q) \leq \overline{b}<\infty$ for all $q\in[0,\infty)$. These assumptions ensure that both players' expected payoffs are always well-defined.

We now compute the following expected payoff functions, which will be useful for later analysis. The agent's expected payoff from completing a task with quality $q$ (i.e., quit at $\tau=\infty$) is
\begin{align*}
v(q) &\equiv \mathbb{E}_{x\mid q}\left[ e^{-r x} q-\frac{c}{r}\left(1-e^{-r x}\right) \,\middle|\, q \right] \\
&= \int_{0}^{\infty} \left[ e^{-rx}q-\frac{c}{r}\left(1-e^{-r x}\right) \right] \lambda_q e^{-\lambda_q x}\,dx= \frac{\lambda_q q-c}{r+\lambda_q}.
\end{align*}
The agent may also plan to quit at some time $\tau$ if the task is still incomplete. This leads to expected payoff
\begin{align}
v(\tau,q) &\equiv \mathbb{E}_{x\mid q}\left[ e^{-rx}q \cdot \indic{x\le \tau} - \frac{c}{r}\left(1-e^{-r \min\{x,\tau\}}\right) \,\middle|\, q \right] \notag \\
&= \frac{\lambda_q q-c}{r+\lambda_q}\left(1-e^{-(r+\lambda_q) \tau}\right) = \left(1-e^{-(r+\lambda_q) \tau}\right) v(q). \label{eqn:v}
\end{align}
Similarly, the principal's expected payoff when the agent chooses to quit at $\tau$ is
\begin{align}
w(\tau,q) &\equiv \mathbb{E}_{x\mid q}\left[ e^{-r_px} b(q) \cdot \indic{x\le \tau} \,\middle|\, q \right] \notag \\
&= \frac{\lambda_q b(q)}{r_p+\lambda_q}\left(1-e^{-(r_{p}+\lambda_q) \tau}\right) \equiv \left(1-e^{-(r_{p}+\lambda_q) \tau}\right) w(q). \label{eqn:w}
\end{align}
In particular, $w(q)=\frac{\lambda_q b(q)}{r_p+\lambda_q}$ denotes the principal's expected payoff from a completed task with quality $q$.

Assume that the principal can commit to an arbitrary rule that specifies how and when to disclose information to the agent about the quality of the task.\footnote{In organizations, commitment can be sustained formally through contracts (e.g., reviewed by an external evaluator), or informally through reputation. Moreover, commitment can also be understood as using existing instruments to generate signals that cannot be hidden. In the previous example of a research lab, suppose the project is to develop a new empirical method, and the PI has control over exclusive data sets that can improve the understanding of the value of completing a breakthrough through test runs. The PI can decide which data set to provide to the postdoc and when to provide it, but once the data set is provided, the results of the test run will be publicly observed. Finally, as we show in the following sections, when the principal is patient and the environment either is stationary or makes the agent become optimistic over time, a static information policy turns out to be dynamically optimal. In these cases, full commitment power over arbitrary signal processes is stronger than necessary.} She can make any number of disclosures at any time during the process. The agent knows the rule and understands the principal's commitment, observes the realized disclosures, rationally updates his belief about the quality of the task, and best responds by making effort choices that maximize his expected payoff. Adapting the proof of Proposition 1 in \citet{KG11}, one can show that it is without loss of generality to restrict attention to a particular class of rules that produce ``recommended actions'', an analog to the revelation principle. Since the agent's decision to quit is irreversible, the only relevant instruction the principal can provide is the time at which the agent should stop. We refer to a rule governing these disclosures as an \emph{information policy}, which informs the agent whether to continue ($\sigma_t=0$) or quit ($\sigma_t=1$) at any time. 
\begin{Def}[Information policy]
An \emph{information policy} $\bm\sigma=\{\sigma_t\}_{t\ge 0}$ is a $\{0,1\}$-valued, non-decreasing, and right-continuous stochastic process whose distribution depends on the task quality $q$. The policy is fully characterized by the family of conditional probabilities $\mathbb{P}(\sigma_t = 1 \mid q)$, which represents the probability that quitting has been recommended by time $t$ given quality $q$.
\end{Def}
The principal's objective  is to maximize her expected payoff $\mathbb{E}_{\bm{\sigma}}[w(\tau,q)]$ over all possible information policies $\bm\sigma$, where $\tau$ denotes the agent's random quitting time induced by $\bm\sigma$.\footnote{The agent's quitting time is determined by comparing two payoffs at each instant of time: the continuation payoff from following the principal's recommendations versus the value of ignoring further information. We omit the formal expressions from the main text for brevity.}

In the following sections, we derive the optimal information policy in closed form, in order to obtain the necessary and sufficient conditions for the principal to benefit from dynamic disclosure.

\section{Binary-Stationary Case}\label{sec:bin}

We begin our analysis with the baseline version of the problem, where the task quality takes one of two possible values: low quality, $L>0$, or high quality, $H>L$, and the environment is stationary: $\lambda_{H}=\lambda_{L}=\lambda$. The agent's prior belief assigns probability $\mu\in\kh{0,1}$ that the task has high quality.

Assume $v\kh{H}>0>v\kh{L}$, so that the low-quality task is not individually rational for the agent.
We fully characterize the optimal dynamic information policy, and, through this analysis, reveal the first channel that makes dynamic information disclosure valuable: the principal is \emph{impatient} compared with the agent.

The main result for the binary-stationary case is Proposition \ref{prop:opt}, which shows that the principal benefits from dynamic disclosure if and only if she is \emph{less patient}\footnote{While it is common to assume that the principal is more patient than the agent, we want to point out that the principal may be relatively impatient in some cases. In the previous example of a research lab, the PI could be under pressure for a promotion review, and the postdoc may have just been hired.} than the agent, i.e., $r_p>r$. Moreover, if $r_p>r$, \emph{maximally delayed disclosure} is optimal: the principal delays all disclosure up to the maximum time and then fully discloses the task quality.

\subsection{Static Disclosure}
Consider a one-shot information release at time zero, after which the agent decides how long to work.  Under this static disclosure, once the agent receives the principal's initial message, he effectively faces a binary choice: continue working on the task until completion or quit immediately. This is because the environment is time-stationary: by the memoryless property of Poisson arrivals, the agent's continuation problem at any time $t>0$ is identical to the one at time zero. Hence, the static version reduces to the standard Bayesian persuasion problem of \citet{KG11}.

Let $\overline{\mu}$ denote the threshold belief at which the agent is indifferent between working and quitting:
\eqn{\overline{\mu} v\kh{H}+\kh{1-\overline{\mu}}v\kh{L}=0.\label{eqn:mubar}}
For $\mu\ge \overline{\mu}$, no persuasion is necessary. When $\mu<\overline{\mu}$, the principal's objective is to design an initial signal that maximizes the probability the agent's belief remains above the $\overline{\mu}$  threshold, thereby inducing him to work. The well-known solution is a binary signal that we refer to as the $\operatorname{KG}$ policy.
\begin{Def}\label{def:KG}
$\operatorname{KG}$ denotes the static information policy that,  for $\mu<\overline{\mu}$, satisfies:
\begin{enumerate}
\item The agent is told to continue ($\sigma_0=0$) with probability one when the task has high quality ($q=H$), and also with positive probability when the task has low quality ($q=L$).
\item The agent exactly holds belief $\overline{\mu}$ conditional on observing $\sigma_0=0$, i.e., $\mathbb{P}(q=H \mid \sigma_0=0)=\overline{\mu}$.
\item The policy is static: for all $t > 0$, $\sigma_t = \sigma_0$ almost surely.
\end{enumerate}
\end{Def}
The principal's value function under KG with respect to the agent's belief, $W^{\operatorname{KG}}(\mu)$, is depicted in Figure \ref{fig:WKG}. The solid line segments represent the benchmark without persuasion, and the dashed line segment, formed by the concavification of the solid line segments, represents the principal's gain obtained from persuasion.\footnote{Only for illustration, Figure \ref{fig:WKG}, as well as Figures \ref{fig:DD(t)vs.KG}, \ref{fig:WFDD}, and \ref{fig:WFDD1} below, assumes $b(L)=b(H)$; this does not affect any of our results.}
\begin{figure}[!htbp]\centering
\includegraphics[height=0.3\textheight]{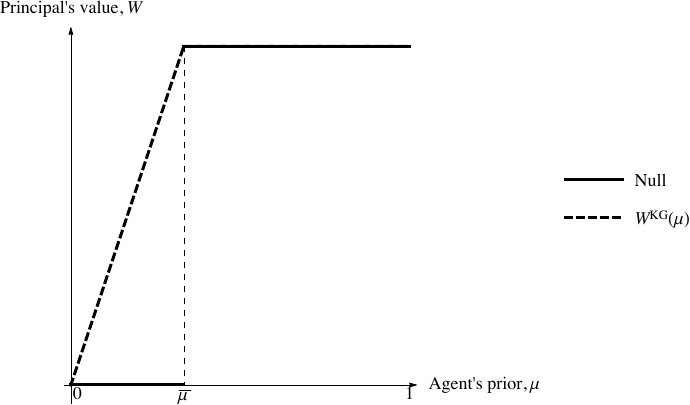}
\caption{Optimal static information policy $\operatorname{KG}$ corresponds to a concavification.}\label{fig:WKG}
\end{figure}

The static $\operatorname{KG}$ policy serves as a benchmark for the dynamic policies we consider next. In fact, we will see that if the principal is relatively patient (low $r_p$), this static disclosure remains dynamically optimal even when more complex information policies  are available.

\subsection{Delayed Disclosure}
We now allow the principal to spread out information over time and examine whether doing so can lead to higher payoffs. Consider the simplest dynamic policy, where the principal withholds all information until time $t>0$, then fully discloses the task quality.\footnote{Of course, the principal could employ more complex dynamic policies that do not fully reveal the state, but as we later show, such complexity is suboptimal in the binary-stationary case.} Again, due to the memoryless property, conditional on working until time $t$, the agent will quit when informed of low quality and continue until completion otherwise. We call this policy \emph{delayed disclosure} at time $t$, denoted by $\operatorname{DD}(t)$.
\begin{Def}[Delayed disclosure]\label{def:DD(t)}$\operatorname{DD}(t)$ denotes the information policy of \emph{delayed disclosure} at time $t$, where the principal tells the agent nothing prior to time $t$, and fully discloses the task quality at $t$. Formally, $\sigma_s = 0$ for all $s < t$, and for all $s \ge t$, $\sigma_s = \indic{q = L}$.\end{Def}
Of course, delaying information can backfire if the agent's initial belief is too low \textendash{} the agent might refuse to work from the outset unless he anticipates a sufficient chance of a high-quality task.  Let $\tilde{\mu}(t)$ denote the minimum prior belief at which the agent is willing to start working under the policy $\text{DD}(t)$. This $\tilde{\mu}(t)$ is determined by the agent's ex ante incentive constraint: the solution to\footnote{Since $\overline{\mu}$ is the smallest belief at which the agent would start working under no disclosure, we have $\tilde {\mu}\kh{t}<\overline{\mu}$ for all $t>0$. It follows from equation \eqref{eqn:v} that 
$v\kh{\tau,L}=\left(1-e^{-(r+\lambda) \tau}\right) v(L)$ with $v\kh{L}<0$, so $v\kh{\tau,L}$ strictly decreases from $0$ to $v\kh{L}$ as $\tau$ increases from $0$ to infinity. Comparing equations \eqref{eqn:mubar} and \eqref{eqn:mutilde} that define the threshold priors $\overline{\mu}$ and $\tilde{\mu}(t)$, we can see that (i) $\tilde {\mu}\kh{t}$ is strictly increasing in $t$, (ii) $\lim_{t\to 0}\tilde{\mu}\kh{t}=0$, $\lim_{t\to\infty}\tilde {\mu}\kh{t}=\overline{\mu}$.} 
\eqn{\tilde{\mu}\kh{t} v\kh{H}+\kh{1-\tilde{\mu}\kh{t} }v(t,L)=0.\label{eqn:mutilde}}

Next, we examine whether $\operatorname{DD}(t)$ enables the principal to increase her payoff relative to $\operatorname{KG}$. The answer depends crucially on patience, i.e., the relationship between $r_p$ and $r$. The gain from $\operatorname{DD}(t)$ compared with $\operatorname{KG}$ is that the agent always starts working regardless of the task quality, and the loss from $\operatorname{DD}(t)$ is that the agent quits from the low-quality task after the disclosure at time $t$. Since the gain occurs earlier than the loss, an impatient principal benefits from $\operatorname{DD}(t)$. The formal result is summarized in the following lemma.
\begin{Lem}\label{lem:1}
Fix a time $t>0$. $\operatorname{DD}(t)$ increases the principal's payoff relative to $\operatorname{KG}$ for some range of priors if and only if the principal is less patient than the agent, i.e., $r_p>r$.
\end{Lem}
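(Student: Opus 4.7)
My plan is to derive closed-form expressions for the principal's value under each policy, form their difference, and reduce its sign at the threshold prior $\tilde\mu(t)$ to $\sgn{r_p-r}$.

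\emph{Step 1: closed forms.} Reading off the concavification depicted in Figure \ref{fig:WKG}, for $\mu<\overline\mu$ we have $W^{\operatorname{KG}}(\mu)=\kh{\mu/\overline\mu}\fkh{\overline\mu\, w(H)+(1-\overline\mu)w(L)}$, because $\operatorname{KG}$ sends the optimistic recommendation with unconditional probability $\mu/\overline\mu$ and, conditional on that recommendation, the agent completes the task. Under $\operatorname{DD}(t)$, which requires $\mu\ge\tilde\mu(t)$ to be feasible (otherwise the agent shirks at time zero), the agent completes whenever $q=H$ and shirks at $\tau=t$ whenever $q=L$, so \eqref{eqn:w} yields $W^{\operatorname{DD}(t)}(\mu)=\mu w(H)+\kh{1-\mu}\kh{1-e^{-(r_p+\lambda)t}}w(L)$.

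\emph{Step 2: evaluate the difference at the threshold.} Subtracting gives
\eqns{\Delta(\mu,t)\equiv W^{\operatorname{DD}(t)}(\mu)-W^{\operatorname{KG}}(\mu)=w(L)\fkh{\kh{1-\mu}\kh{1-e^{-(r_p+\lambda)t}}-\mu\frac{1-\overline\mu}{\overline\mu}},}
which is affine in $\mu$ and, since $w(L)>0$, strictly decreasing. I evaluate at $\mu=\tilde\mu(t)$, using \eqref{eqn:mubar} to replace $\kh{1-\overline\mu}/\overline\mu=-v(H)/v(L)$ and \eqref{eqn:mutilde} to replace $\tilde\mu(t)v(H)=-\kh{1-\tilde\mu(t)}\kh{1-e^{-(r+\lambda)t}}v(L)$. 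The common factor $v(H)/v(L)$ cancels and the expression collapses to
\eqns{\Delta(\tilde\mu(t),t)=w(L)\kh{1-\tilde\mu(t)}\fkh{e^{-(r+\lambda)t}-e^{-(r_p+\lambda)t}},}
whose sign is exactly $\sgn{r_p-r}$.

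\emph{Step 3: conclude both directions.} If $r_p>r$, then $\Delta(\tilde\mu(t),t)>0$, so by continuity $\operatorname{DD}(t)$ strictly beats $\operatorname{KG}$ on an open interval of priors above $\tilde\mu(t)$. If $r_p\le r$, then $\Delta(\tilde\mu(t),t)\le 0$, and because $\Delta(\cdot,t)$ is affine decreasing in $\mu$, $\Delta(\mu,t)\le 0$ on the whole feasibility range $[\tilde\mu(t),\overline\mu)$; for $\mu\ge\overline\mu$ the $\operatorname{KG}$ policy is simply no disclosure, which trivially dominates $\operatorname{DD}(t)$ since delaying has no effect on the $H$-completion value and strictly discards part of the $L$-completion value. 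Thus $\operatorname{DD}(t)$ never strictly improves on $\operatorname{KG}$ at any prior.

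The main obstacle is the algebraic cancellation in Step 2: it is not a priori obvious that the payoff difference at $\mu=\tilde\mu(t)$ should reduce cleanly to the discount-rate comparison. The key observation is that $\overline\mu$ and $\tilde\mu(t)$ are both determined by agent-indifference conditions that share the common ratio $v(H)/v(L)$, so substituting them into the principal's payoff difference eliminates every piece of the agent's parameters except the discount factors $e^{-(r+\lambda)t}$ and $e^{-(r_p+\lambda)t}$, exposing the patience comparison as the sole driver of the sign.
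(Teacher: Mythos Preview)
Your proof is correct and follows essentially the same approach as the paper: compute $W^{\operatorname{KG}}$ and $W^{\operatorname{DD}(t)}$ explicitly, observe that their difference is affine in $\mu$ (equivalently, that the comparison inequality is monotone in $\mu$), and reduce the question to evaluating the sign at the threshold $\tilde\mu(t)$ via the agent's indifference conditions \eqref{eqn:mubar} and \eqref{eqn:mutilde}. The paper phrases the monotonicity step as ``the inequality yields an upper bound on $\mu$, so check at $\mu=\tilde\mu(t)$'', while you phrase it as ``$\Delta$ is affine decreasing, so evaluate at the left endpoint''; the algebraic collapse to $e^{-(r+\lambda)t}-e^{-(r_p+\lambda)t}$ is identical in both.
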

\begin{proof}
All proofs of the results in the main text are in Appendix \ref{app:proof}.
\end{proof}
Figure \ref{fig:DD(t)vs.KG} illustrates this pattern: when $r_p>r$, there exists a range of priors for which $\operatorname{DD}(t)$ yields higher payoffs than $\operatorname{KG}$ (Figure \ref{fig:imp}); when $r_p<r$, $\operatorname{DD}(t)$ is worse for the principal than $\operatorname{KG}$ for all priors (Figure \ref{fig:pat}). 
\begin{figure}[!htbp]
\begin{subfigure}[t]{0.49\textwidth}\centering
\includegraphics[width=\textwidth]{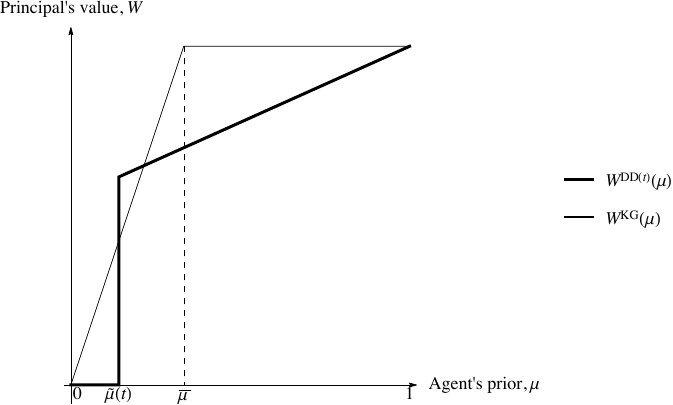}
\caption{$r_p>r$.}\label{fig:imp}
\end{subfigure}\hfill
\begin{subfigure}[t]{0.49\textwidth}\centering
\includegraphics[width=\textwidth]{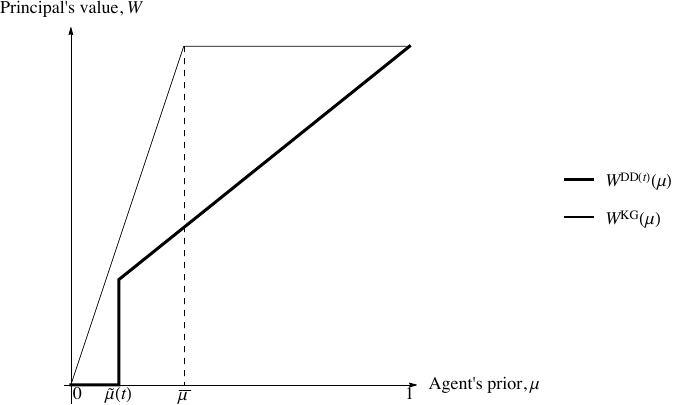}
\caption{$r_p<r$.}\label{fig:pat}
\end{subfigure}
\caption{The principal's payoff comparison between $\operatorname{DD}(t)$ and $\operatorname{KG}$.}\label{fig:DD(t)vs.KG}
\end{figure}

Taking one step further, we argue that whenever the principal gains from delaying disclosure (i.e., when $r_p>r$), she can improve her payoff by delaying as much as possible subject to keeping the agent on board. For any $\mu<\overline{\mu}$, let $\tilde{t}\kh{\mu}$ denote the \emph{maximum delay time}, i.e., the latest time of disclosure subject to the constraint that the agent's ex ante payoff is at least zero:\footnote{Note from equations \eqref{eqn:mutilde} and \eqref{eqn:ttilde} that $\tilde{t}\kh{\mu}$ is the inverse of $\tilde{\mu}\kh{t}$, so it satisfies (i)  $\tilde {t}\kh{\mu}$ is strictly increasing in $\mu$ within the range $\kh{0,\overline{\mu}}$, (ii) $\tilde{t}\kh{0}=0$, $\lim_{\mu\uparrow\overline{\mu}}\tilde {t}\kh{\mu}=\infty$, (iii) $\tilde {t}\kh{\mu}=\infty$ for all $\mu\ge \overline{\mu}$.}
\eqn{\mu v(H)+(1-\mu) v\kh{\tilde{t}(\mu), L}=0.\label{eqn:ttilde}}
Among all disclosure times $t$ that are sufficient to motive the agent under $\operatorname{DD}\kh{t}$ (i.e., $t\le \tilde{t}(\mu)$), the principal's payoff is increasing in $t$. Intuitively, a longer wait means the agent will work longer on the low-quality task before quitting, which is precisely what the impatient principal values. Therefore, the optimal choice is to set $t = \tilde{t}(\mu)$, the longest delay that just barely induces the agent to participate. We refer to this extreme policy as  \emph{maximally delayed disclosure}, denoted by $\operatorname{MDD}$.
\begin{Def}[Maximally delayed disclosure]\label{def:FDD}
$\operatorname{MDD}$ denotes the information policy of \emph{maximally delayed disclosure}. For $\mu<\overline{\mu}$, the principal provides no information prior to the time $\tilde{t}(\mu)$, and fully discloses the task quality at $\tilde{t}(\mu)$. Formally, $\sigma_s = 0$ for all $s < \tilde{t}(\mu)$, and for all $s \ge \tilde{t}(\mu)$, $\sigma_s = \indic{q=L}$.
\end{Def}
By construction, for $\mu<\overline{\mu}$, the agent is exactly indifferent to starting work under $\operatorname{MDD}$ (he gets zero expected payoff), so he complies with the recommendation to continue until $\tilde {t}\kh{\mu}$. Figure \ref{fig:WFDD} depicts the principal's value under $\operatorname{MDD}$ compared to simpler $\operatorname{DD}(t)$ policies and the static $\operatorname{KG}$ policy when $r_p>r$.
\begin{figure}[!htbp]
\begin{subfigure}[t]{0.49\textwidth}\centering
\includegraphics[width=\textwidth]{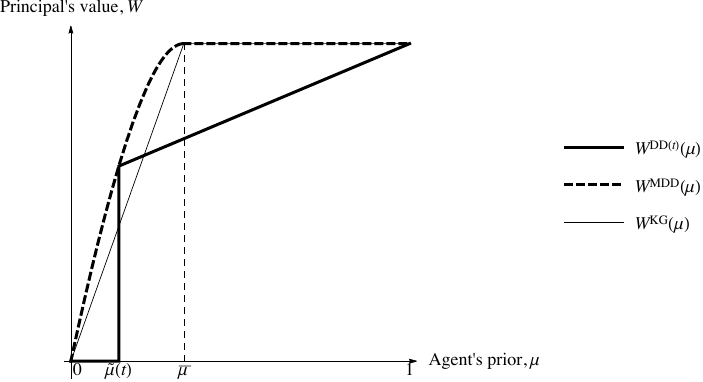}
\caption{$r_p>r$.}\label{fig:WFDD}
\end{subfigure}\hfill
\begin{subfigure}[t]{0.49\textwidth}\centering
\includegraphics[width=\textwidth]{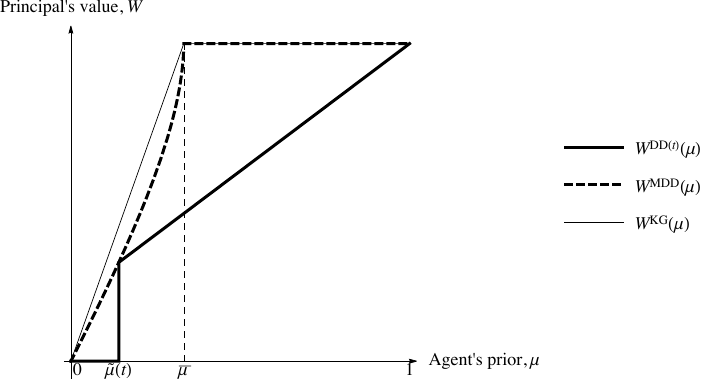}
\caption{$r_p<r$.}\label{fig:WFDD1}
\end{subfigure}
\caption{The principal's payoff comparison among $\operatorname{DD}(t)$, $\operatorname{MDD}$ and $\operatorname{KG}$.}
\end{figure}
In particular, for all $\mu\in\kh{0,\overline{\mu}}$, $\operatorname{MDD}$ strictly increases the principal's payoff relative to the optimal static policy $\operatorname{KG}$. Moreover, $W^{\operatorname{MDD}}$ is the upper envelope and concave closure of the family of value functions $\hkh{W^{\operatorname{DD}(t)}:t>0}$. Hence, $W^{\operatorname{MDD}}$ itself is a concave function, and it follows that $\operatorname{MDD}$ yields a higher payoff than any information policy with a less informative disclosure. In other words, once we allow the disclosure time to be optimally chosen as a function of the prior, gradual or partial disclosures cannot outperform the bang-bang approach of $\operatorname{MDD}$. In Proposition \ref{prop:opt}, we establish that $\operatorname{MDD}$ is indeed optimal within the set of all dynamic information policies.

Conversely, consider the case $r_p<r$. The relationship among the information policies $\operatorname{DD}(t)$, $\operatorname{MDD}$ and $\operatorname{KG}$ is shown in Figure \ref{fig:WFDD1}. The principal still prefers $\operatorname{MDD}$ to any policy in the family $\hkh{\operatorname{DD}(t):t>0}$, but $W^{\operatorname{MDD}}$ is no longer concave, and its concavification is precisely $W^{\operatorname{KG}}$. In this case, Proposition \ref{prop:opt} establishes the optimality of the static policy  $\operatorname{KG}$ within all dynamic information policies. In particular, when the principal is more patient than the agent, her commitment power over the entire signal process has no additional value.

\subsection{Optimal Information Policy}\label{subsec:opt} 
We are now ready to state the main result for the binary-stationary case, Proposition \ref{prop:opt},which establishes the optimality of the information policies analyzed above. Among all dynamic information policies, KG is optimal when $r_p\le r$, and MDD is optimal when $r_p>r$.\footnote{When $r_p=r$, a range of policies including KG and MDD are optimal, as long as they are sufficient to motivate the agent and such that the agent's ex ante expected payoff is exactly zero (e.g., non-maximally delayed partial disclosure).}
\begin{Prop}\label{prop:opt}In the binary-stationary case,
\begin{enumerate}
\item If $r_p\le r$,  then $\operatorname{KG}$ is optimal among all dynamic information policies.
\item If $r_p>r$, then $\operatorname{MDD}$ is optimal among all dynamic information policies.
\end{enumerate}	
\end{Prop}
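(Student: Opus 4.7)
The plan is to reduce the principal's problem to choosing a distribution over the agent's shirk time on the low-quality task, and then apply a Jensen-type comparison that hinges on the ordering of $r_p$ and $r$. Following the revelation-principle-style reduction stated in the text, I restrict attention to obedient direct-recommendation policies. Such a policy induces distributions $F_H,F_L$ of the shirk time $\tau_q$ conditional on quality $q$; the principal's payoff is $\mu\bbE{w(\tau_H,H)}+(1-\mu)\bbE{w(\tau_L,L)}$.

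The first step is to argue that without loss of generality $\tau_H=\infty$, i.e., the agent always completes the high-quality task. Given any obedient policy, modify it by replacing each ``shirk'' recommendation on the event $\hkh{q=H}$ with ``continue''. The principal's payoff weakly increases because $w(\tau,H)$ is increasing in $\tau$. For obedience, note that under the modification the posterior $\mu_t$ on $\hkh{q=H}$ at any ``continue'' node weakly rises (a larger share of that event now comes from $q=H$), and the agent's continuation value at any such node is of the form $\mu_tg_H+(1-\mu_t)g_L(t)$ with $g_H\le v(H)$ and $g_L(t)\le 0$; this is weakly increasing in $\mu_t$ since $v(H)>g_L(t)$. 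So interim obedience survives; shirk-obedience on the remaining $q=L$ events is unchanged.

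With $\tau_H=\infty$, change variables to $X:=1-e^{-(r+\lambda)\tau_L}\in[0,1]$. Using $v(\tau,L)=Xv(L)$ and $w(\tau,L)=\psi(X)w(L)$, where $\psi(x):=1-(1-x)^{(r_p+\lambda)/(r+\lambda)}$, the principal's problem reduces to maximizing $\bbE{\psi(X)}$ subject to the ex-ante participation constraint $\bbE{X}\le C(\mu):=\mu v(H)/((1-\mu)(-v(L)))$ (the trivial case $\mu\ge\overline\mu$, where $C(\mu)\ge 1$, is handled separately). Since $\psi(0)=0$, $\psi(1)=1$, and $\psi$ is strictly concave when $r_p>r$, strictly convex when $r_p<r$, and linear when $r_p=r$, two cases arise. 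If $r_p>r$, Jensen's inequality gives $\bbE{\psi(X)}\le\psi(\bbE{X})\le\psi(C(\mu))$, with equality iff $X$ is deterministic at $C(\mu)$, which corresponds to $\tau_L=\tilde t(\mu)$, i.e., $\operatorname{MDD}$. If $r_p\le r$, convexity implies $\psi(x)\le x$ on $[0,1]$ (the function lies weakly below its chord from $(0,0)$ to $(1,1)$), so $\bbE{\psi(X)}\le\bbE{X}\le C(\mu)$ with equality iff $X\in\hkh{0,1}$ and $\Pr(X=1)=C(\mu)$, i.e., $\operatorname{KG}$. This establishes an upper bound on the principal's payoff using only the ex-ante participation constraint.

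To close the argument, I verify that both candidate policies are actually obedient (not merely participation-compatible), so the upper bound is attained. For $\operatorname{MDD}$, the agent's posterior conditional on continuing at any $t<\tilde t(\mu)$ is still $\mu$ (no information has been revealed), so $U_t=\mu v(H)+(1-\mu)v(\tilde t(\mu)-t,L)$ equals $0$ at $t=0$ by the definition of $\tilde t(\mu)$ and is strictly increasing in $t$ (because $v(L)<0$ and $v(\cdot,L)$ is decreasing), hence $U_t\ge 0$ throughout. For $\operatorname{KG}$, the agent's posterior after the ``continue'' message is $\overline\mu$ and stays there by the memoryless property, so $U_t=\overline\mu v(H)+(1-\overline\mu)v(L)=0$ for all $t$. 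The main obstacle is the $\tau_H=\infty$ reduction, which requires care in tracking how posteriors at ``continue'' nodes respond to the mechanism modification; once that is handled, the risk-comparison step is essentially a two-line application of Jensen's inequality that cleanly reflects the comparison of discount rates.
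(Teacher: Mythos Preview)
Your proof is correct and follows the same high-level architecture as the paper: reduce to $\tau_H=\infty$, optimize over the distribution of $\tau_L$ subject only to the ex-ante participation constraint (a relaxation), and then verify that the candidate optima $\operatorname{KG}$ and $\operatorname{MDD}$ are in fact obedient. The substantive difference is in the optimization step. The paper reformulates the problem as minimizing $P(\beta)=\beta(0)+\int e^{-(r_p+\lambda)t}\,d\beta(t)$ subject to $C(\beta)\ge C_1$ and, for the case $r_p>r$, invokes H\"older's inequality (with exponent $\rho=(r_p+\lambda)/(r+\lambda)$) followed by a short monotonicity argument in $\beta(0)$. Your change of variables $X=1-e^{-(r+\lambda)\tau_L}$ collapses this to maximizing $\bbE{\psi(X)}$ over $\bbE{X}\le C(\mu)$ with $\psi(x)=1-(1-x)^{(r_p+\lambda)/(r+\lambda)}$, after which concavity/convexity of $\psi$ and a one-line Jensen (or chord) bound finish both cases uniformly. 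This is cleaner and makes the risk-attitude intuition the paper describes in Subsection~\ref{subsec:opt} literally the proof, whereas the paper's H\"older route is slightly more opaque but handles the $\beta(0)$ atom without a separate change of variables.

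One small remark on your Step~1: the obedience-preservation argument for the modified mechanism is a bit terse (e.g., after the modification $g_H=v(H)$ exactly, not merely $\le$; and shirk-obedience on $q=L$ is strengthened rather than ``unchanged'' since the shirk signal now perfectly reveals $q=L$). But this step is in fact redundant: since you later verify full obedience of $\operatorname{KG}$ and $\operatorname{MDD}$ directly, it suffices---as the paper does---to treat the $\tau_H=\infty$ reduction purely within the relaxed program (IR only), where it is immediate.
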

The optimality is within the entire set of dynamic information policies, including those with multiple or even infinite disclosures that may induce quitting times other than zero or $\tilde{t}\kh{\mu}$.

The intuition behind Proposition \ref{prop:opt} is best understood by viewing the information design problem as choosing a lottery over the agent's quitting time. Any information policy, $\bm{\sigma}$, induces a distribution over times at which the agent quits, $\tau$. The principal wants to make this quitting time as ``late'' as possible, but at the same time she must respect the agent's incentive compatibility constraints. Because the agent never needs to be encouraged to continue when the task quality is high, the non-trivial design problem only concerns when the task quality is low. 

Crucially, under exponential discounting, the principal and agent have opposing risk preferences over time lotteries: the principal is risk-averse, while the agent is risk-seeking.\footnote{The connection between exponential discounting and risk preferences over time lotteries already features in the analysis in Section IV.F of \cite{ES20}.} Formally, applying Jensen's inequality to equations \eqref{eqn:v} and  \eqref{eqn:w}:
\eqns{&\mathbb{E}_{\bm{\sigma}}[w(\tau,L)]=w\kh{L}\cdot \mathbb{E}_{\bm{\sigma}}\left[1-e^{-\left(r_{p}+\lambda\right) \tau}\right] \leq w\kh{L}\left(1-e^{-\left(r_{p}+\lambda\right) \mathbb{E}_{\bm{\sigma}}[\tau]}\right)=w\kh{\mathbb{E}_{\bm{\sigma}}[\tau],L}, \\ 
&\mathbb{E}_{\bm{\sigma}}[v(\tau, L)]=\underbrace{v(L)}_{(-)} \cdot\, \mathbb{E}_{\bm{\sigma}}\left[1-e^{-(r+\lambda) \tau}\right] \geq \underbrace{v(L)}_{(-)} \cdot\, \left(1-e^{-\left(r+\lambda\right) \mathbb{E}_{\bm{\sigma}}[\tau]}\right)=v(\mathbb{E}_{\bm{\sigma}}[\tau], L).}
Given this difference, the optimal policy depends on whose risk attitude dominates, and skews towards one of two extremes depending on who is more patient.  If $r_p<r$, the agent's risk-seeking preference dominates, favoring an extreme lottery with maximum uncertainty: either immediate quitting or indefinite continuation, precisely what $\operatorname{KG}$ delivers. On the other hand, if $r_p>r$,  the principal's risk-aversion dominates, favoring a deterministic quitting time, which $\operatorname{MDD}$ provides by concentrating all probability mass at $\tilde{t}\kh{\mu}$.

\section{General Stationary Case}\label{sec:gen}
We now characterize the optimal dynamic information policy for general reward distributions $F$, assuming that the environment is stationary ($\lambda_{q} \equiv \lambda$) and that $F$ has a full-support density over $[0,\infty)$. The latter assumption implies that it is without loss of optimality to restrict attention to policies that induce a deterministic quitting time $\tau(q)$ for each quality realization.\footnote{In discrete settings (e.g., Sections \ref{sec:bin} and \ref{sec:Poi}), optimal policies often require randomization to convexify the principal's feasible payoffs and satisfy individual rationality constraints with equality. With an atomless quality distribution, the principal can bind these constraints by continuously adjusting the quality cutoff, thereby eliminating the need for randomization.} In addition, we assume that the agent's ex ante payoff from always completing the task is negative, i.e., $\mathbb{E}_{q}[v(q)]=\int_{0}^{\infty} v(q)\, d F(q)<0$, as otherwise no persuasion would be necessary. Finally, we assume that the principal's bonus $b\kh{q}$ is nondecreasing in task quality $q$.\footnote{This assumption applies to many situations. One example is that the principal receives the same bonus at completion regardless of the task quality ($b\kh{q}=b_0$). Another is that the principal and the agent have the same reward from the task ($b\kh{q}=q$), so the difference between their payoffs is only that the principal does not incur the costs. Furthermore, the results for this section would go through even if $b\kh{q}$ may be decreasing, as long as it does not decrease too rapidly. Formally, we need the ratio $\abs{v\kh{q}}/b\kh{q}$ to be nonincreasing over $q\in\kh{0,\overline{q}}$.}

The main result for this section parallels the binary-case finding from Section \ref{sec:bin}: in a stationary environment, the principal benefits from dynamic disclosure if and only if she is less patient than the agent ($r_p>r$). If $r_p\le r$, Proposition \ref{prop:gKG} shows that the optimal information policy is an adaptation of $\operatorname{KG}$. By contrast, if $r_p>r$, the principal strictly benefits from dynamic disclosure, and the optimal policy becomes a \emph{cutoff cascade} \textendash{} a series of time-varying thresholds where at each instant, the principal reveals only whether the task quality exceeds the current cutoff, as shown in Propositions \ref{prop:gstar} and \ref{prop:gFDD}.  This gradualism custom-fits timing to realized quality, generalizing $\operatorname{MDD}$ from a single cutoff to a filtering of low qualities over time.

The section proceeds as follows. Subsection~\ref{sec:static_gen} establishes the optimality of static disclosure when the principal is patient ($r_p\le r$). Subsection~\ref{sec:cutoff} shows that when the principal is impatient ($r_p>r$), the optimal policy takes the form of a \textit{cutoff cascade}. Subsections~\ref{sec:IGD} and \ref{sec:DGD} then characterize the specific structure of  the optimal cascade.

\subsection{Static Disclosure}\label{sec:static_gen}
As in the binary case, we first consider the static version where the principal provides a one-shot disclosure and the agent chooses effort. Because $F$ has full support and no atoms, there exists a unique task quality $q^*>0$ such that \eqn{\int_{q^*}^{\infty}v\kh{q}\,dF\kh{q}=0.\label{eqn:qstar}} Let $\overline q$ denote the \emph{minimum individually rational quality} for the agent; that is, $\overline q$ solves $$v\kh{\overline{q}}=\frac{\lambda \overline{q}-c}{r+\lambda}=0,\quad\text{i.e.,}\quad\overline{q}=\frac{c}{\lambda}.$$
It follows from equation \eqref{eqn:qstar} that $q^*<\overline{q}$.

Consider the following static information policy, which is a counterpart of KG: the principal informs the agent whether the task has high quality ($q\ge q^*$) or low quality ($q<q^*$). By stationarity, if informed of high quality, the agent completes the task ($\tau=\infty$); otherwise, he quits immediately  ($\tau=0$). Slightly abusing notation, we also use $\operatorname{KG}$ to represent this information policy.
\begin{Def}
$\operatorname{KG}$ denotes the static information policy that informs the agent whether the task has high quality ($q\ge q^*$) or low quality ($q<q^*$). Formally, $\sigma_0\kh{q}=\indic{q < q^*}$ and $\sigma_t = \sigma_0$ for all $t \ge 0$, with induced quitting time
$$\tau^{\operatorname{KG}}\kh{q}=\Brace{&0,&&q<q^*,\\&\infty,&&q\ge q^*.}$$
\end{Def}
Indeed, the initial disclosure motivates the agent to complete some tasks with lower quality than the minimum individually rational level (those in the range $q^*\le q<\overline{q}$). As an analog of the first part of Proposition \ref{prop:opt}, we show below that if $r_p\le r$, this static policy is optimal among all dynamic ones.
\begin{Prop}\label{prop:gKG}
If $r_p\le r$,	then $\operatorname{KG}$ is optimal among all dynamic information policies.
\end{Prop}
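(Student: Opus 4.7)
The plan is to adapt the four-step argument from the proof of Proposition~\ref{prop:opt}: relax the principal's program to the single ex~ante IR constraint, solve the relaxation by a pointwise Lagrangian, and then check that $\operatorname{KG}$ is feasible in the unrelaxed dynamic problem. For any information policy with induced shirk time $\tau$, set $Z\equiv e^{-(r+\lambda)\tau}\in[0,1]$, $y(q)\equiv\mathbb{E}[Z\mid q]\in[0,1]$, and $\theta\equiv(r_p+\lambda)/(r+\lambda)$, which satisfies $\theta\le 1$ under the hypothesis $r_p\le r$. From~\eqref{eqn:v} the ex~ante IR reads $\int y(q)v(q)\,dF(q)\le\int v(q)\,dF(q)$, while from~\eqref{eqn:w} maximizing the principal's payoff is equivalent to minimizing $\int w(q)\,\mathbb{E}[Z^\theta\mid q]\,dF(q)$.

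The core is a two-step lower bound for any feasible policy. First, since $Z\in[0,1]$ and $\theta\le 1$ give $Z^\theta\ge Z$ pointwise, $\mathbb{E}[Z^\theta\mid q]\ge y(q)$ and hence $\int w\,\mathbb{E}[Z^\theta\mid q]\,dF\ge\int w\,y\,dF$. Second, I would solve the linear subproblem of minimizing $\int w\,y\,dF$ over $y\in[0,1]$ subject to $\int y\,v\,dF\le\int v\,dF$ by a pointwise Lagrangian: for multiplier $\mu\ge 0$ the minimizer of $y\kh{w(q)+\mu v(q)}$ on $[0,1]$ is $y=1$ where $w+\mu v<0$ and $y=0$ where $w+\mu v>0$. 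The threshold ratio $w(q)/(-v(q))=\lambda b(q)(r+\lambda)/\fkh{(c-\lambda q)(r_p+\lambda)}$ is nondecreasing on $[0,\overline{q})$ because $b$ is nondecreasing and $c-\lambda q$ strictly decreasing, so the choice $\mu^{*}=w(q^{*})/(-v(q^{*}))$ delivers the cutoff $y^{\operatorname{KG}}(q)\equiv\indic{q<q^{*}}$ as pointwise minimizer. Since $y^{\operatorname{KG}}$ binds the constraint by equation~\eqref{eqn:qstar}, complementary slackness identifies it as the global minimizer of the linear subproblem, and chaining the two bounds gives $\int w\,\mathbb{E}[Z^\theta\mid q]\,dF\ge\int w\,y^{\operatorname{KG}}\,dF$, the value attained by $\operatorname{KG}$.

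To finish, I would check that $\operatorname{KG}$ is feasible in the unrelaxed problem. By the memoryless property of Poisson arrivals, after the optimistic message the agent's conditional distribution of $q$ remains supported on $[q^{*},\infty)$ for as long as the task is uncompleted; his continuation value from completing therefore equals the time-zero posterior mean $\int_{q^{*}}^\infty v(q)\,dF(q)/(1-F(q^{*}))=0$ at every time, so continuing is weakly optimal throughout, while after the pessimistic message the posterior expected payoff is strictly negative and immediate shirking is optimal. The step I expect to be the main obstacle is the monotonicity of the threshold ratio $-w/v$ used in the Lagrangian analysis: this is where the hypothesis on $b(q)$ enters essentially, and without some such monotonicity the pointwise Lagrangian minimizer need not take the form of a single cutoff, so $\operatorname{KG}$ could fail to solve even the linear subproblem.
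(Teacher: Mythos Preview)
Your argument is correct and follows a genuinely different route from the paper. The paper does not use a Lagrangian; instead, for an arbitrary feasible $\tau$ it invokes the intermediate value theorem to pick a threshold $y\in[0,\overline q]$ with $\int_0^y(-v)\,dF=C(\tau)$, then establishes the lower bound $P(\tau)\ge\int_0^y w\,dF$ by a direct rearrangement argument that splits the integral at $y$ and uses the monotonicity of $w$ and of $(-v)$ \emph{separately} (equations \eqref{eqn:yimpln}--\eqref{eqn:ydef1}), and finally shows $y\ge q^*$ from the IR bound. Your two-step reduction---first $Z^\theta\ge Z$ on $[0,1]$ to collapse the problem to a linear program in $y(q)=\mathbb{E}[Z\mid q]$, then a pointwise Lagrangian with multiplier $\mu^*=w(q^*)/(-v(q^*))$---is more structured and makes transparent why the operative hypothesis is precisely the monotonicity of the \emph{ratio} $w/(-v)$: that is exactly what forces the Lagrangian minimizer to be a single cutoff. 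The paper's approach is more elementary in that it avoids any appeal to duality, but your route is shorter and isolates the monotone-ratio structure more cleanly; note also that with $b$ nondecreasing and $c-\lambda q$ strictly decreasing the ratio is in fact \emph{strictly} increasing, so the tie at $q=q^*$ is a null set and the cutoff is unambiguous. Your final obedience check is the specialization of the paper's Lemma~\ref{lem:DGD} to the constant cutoff $\tilde q(s)\equiv q^*$.
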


\subsection{Cutoff Cascades}\label{sec:cutoff}
Henceforth, we consider the case where the principal is less patient than the agent, $r_p>r$. As in the binary case, the principal strictly benefits from dynamic disclosure over static disclosure. However, with a continuum of qualities, the optimal dynamic policy differs from $\operatorname{MDD}$: rather than a single delayed revelation, the principal now customizes disclosure times for different realized qualities $q$, strictly benefiting from this additional design flexibility. We define the following family of information policies:
\begin{Def}[Cutoff cascade]
An information policy features a \emph{cutoff cascade} if there exists a \emph{cutoff quality} function $\tilde{q}\kh{s}$ that is nondecreasing and ranges in $\fkh{0,\overline{q}}$, such that at each instant $s\ge 0$ if the task is incomplete, the principal tells the agent whether $q\ge \tilde{q}\kh{s}$ or $q<\tilde{q}\kh{s}$. The agent quits when informed of $q<\tilde{q}\kh{s}$ and continues otherwise.  Formally, $\sigma_s\kh{q}=\indic{q < \tilde{q}\kh{s}}$.
\end{Def}
Figure~\ref{fig:cutoff} shows examples of cutoff cascade policies. 
\begin{figure}[h]
\begin{subfigure}[t]{0.49\textwidth}
\includegraphics[width=\textwidth]{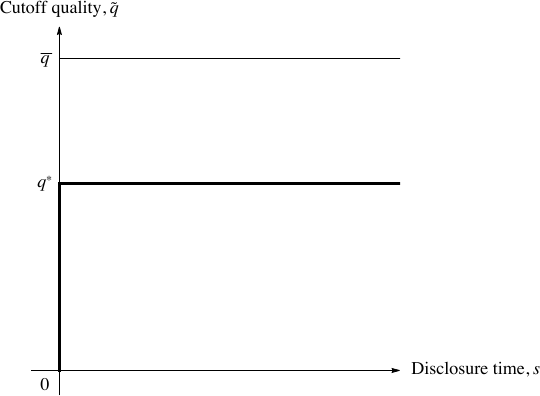}
\caption{Static disclosure with cutoff quality $q^*$.}\label{fig:cutoff1}
\end{subfigure}\hfill
\begin{subfigure}[t]{0.49\textwidth}
\includegraphics[width=\textwidth]{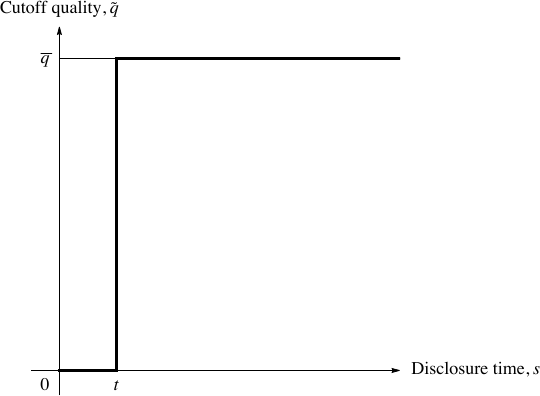}
\caption{One-shot delayed disclosure at time $t$.}	\label{fig:cutoff2}
\end{subfigure}
\begin{subfigure}[t]{0.49\textwidth}
\includegraphics[width=\textwidth]{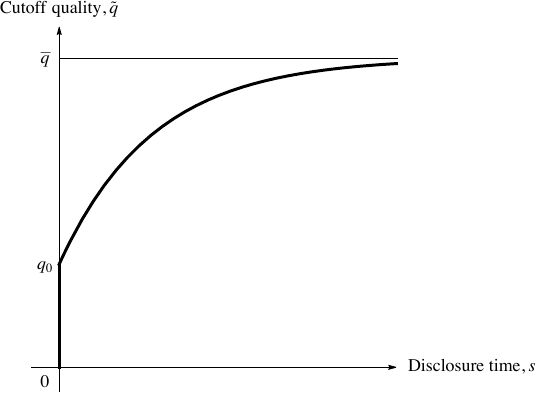}
\caption{Initial disclosure with cutoff quality $q_0$, followed by a series of gradual disclosures.}\label{fig:cutoff3}
\end{subfigure}\hfill
\begin{subfigure}[t]{0.49\textwidth}
\includegraphics[width=\textwidth]{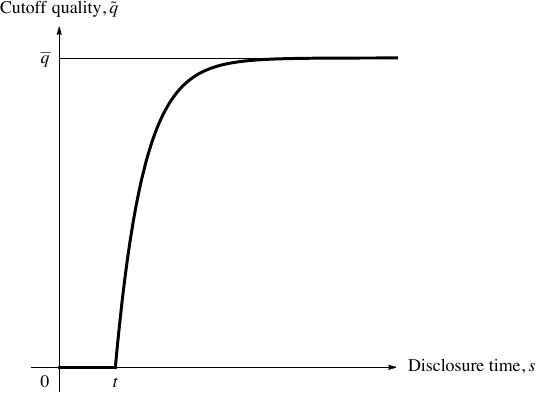}
\caption{Delayed gradual disclosure starting at time $t$.}\label{fig:cutoff4}
\end{subfigure}
\caption{Examples of cutoff cascades.}\label{fig:cutoff}
\end{figure}
$\operatorname{KG}$ corresponds to the constant cutoff quality function $\tilde{q}(s)=q^{*}$, as shown in Figure~\ref{fig:cutoff1}. Figure \ref{fig:cutoff2} plots one-shot delayed disclosure at time $t$, the direct adaptation of the information policy $\operatorname{DD}(t)$ (Definition \ref{def:DD(t)}). Figure \ref{fig:cutoff3} plots \emph{immediate and gradual disclosure} with cutoff quality $q_0$, $\operatorname{IGD}\kh{q_0}$, which combines an immediate disclosure with cutoff quality $q_0$, followed by gradual revelation. Figure \ref{fig:cutoff4} plots  \emph{delayed gradual disclosure} starting at time $t$, $\operatorname{DGD}(t)$, which delays all disclosure until time $t$ before beginning gradual revelation. We show that when $r_p$ is slightly larger than $r$, the optimal policy is $\operatorname{IGD}\kh{q^{**}}$ for some optimal $q^{**}$ (Proposition \ref{prop:gstar}), while when $r_p$ is much larger than $r$, it is $\operatorname{DGD}\kh{\tilde{t}}$ for some optimal $\tilde{t}$ (Proposition \ref{prop:gFDD}).

To understand the logic behind the optimality of cutoff cascade policies, we employ a two-step relaxation approach. We first consider a relaxed version of the principal's optimization problem where we temporarily set aside the interim incentive compatibility constraints. In this relaxed program, the principal can choose an arbitrary joint distribution over $(q,\tau)$ \textendash{} the task quality and the quitting time \textendash{}  subject only to two constraints: (i) the marginal distribution of $q$ matches the prior $F$, and (ii) the agent's individual rationality constraint is satisfied.  Importantly, we ignore whether such a joint distribution can actually be implemented through any feasible information policy. 

We argue that this relaxed program admits an optimal solution that is \textit{positively assortative} \textendash{} higher quality tasks are assigned longer working times. This follows from an optimal assignment principle: since the principal's bonus $b(q)$ is nondecreasing in task quality $q$, each additional moment of work generates higher returns on high-quality tasks than on low-quality tasks. If the principal mistakenly assigned a high-quality task a short working time and a low-quality task a long working time, she could improve her payoff by swapping these assignments without affecting the agent's expected payoff. This positive assortativity naturally leads to the structure of a cutoff cascade: tasks are filtered out in order of increasing quality, with the lowest-quality tasks revealed (and quit) first, while higher-quality tasks continue.

This insight is not merely intuitive \textendash{} it plays a central role in the formal analysis. By establishing positive assortativity in the relaxed program, we reduce the optimization to a tractable family of cutoff cascades. The second step of our approach shows that a positively assortative solution to the relaxed program can indeed be implemented through a cutoff cascade policy, thereby restoring incentive compatibility. This implementability result is formalized in the following Lemma \ref{lem:DGD}. Individual rationality is clearly necessary: the agent must obtain at least zero ex ante expected payoff, since otherwise he can simply ignore all information, quit immediately, and ensure a zero payoff. Lemma \ref{lem:DGD} shows that individual rationality is both necessary and sufficient for the agent to follow cutoff cascade policies.
\begin{Lem}\label{lem:DGD}
Consider any cutoff cascade policy $\bm{\sigma}$. The agent's best response is to follow $\bm{\sigma}$ if and only if $\bm{\sigma}$ is individually rational; that is, following $\bm{\sigma}$  yields a nonnegative ex ante payoff for the agent.
\end{Lem}
The intuition behind Lemma \ref{lem:DGD} relies on the positive assortativity: higher quality leads to later quitting under cutoff cascades. As time passes without being told to quit, the agent's beliefs about task quality improve \textendash{} after all, if the task were low quality, it likely would have been revealed by now. This belief updating makes cutoff cascade policies incentive compatible, as being asked to continue is good news about quality.

Having established that cutoff cascades are without loss of optimality, we now characterize the specific form of the optimal cascade. A key question is whether the principal should begin disclosing immediately or delay. The analysis divides into two cases based on whether a baseline policy \textendash{} \textit{immediate and gradual disclosure} with no initial information, $\operatorname{IGD}(0)$ (formally defined in Subsection~\ref{sec:IGD}) \textendash{} satisfies the agent's individual rationality constraint. When the principal is moderately impatient,  $\operatorname{IGD}(0)$ is not individually rational, and the principal optimally provides an initial disclosure to boost the agent's belief before beginning gradual revelation (Proposition~\ref{prop:gstar}). When the principal is highly impatient, $\operatorname{IGD}(0)$ is individually rational, and the principal can profitably delay before beginning gradual revelation (Proposition~\ref{prop:gFDD}).

\subsection{Immediate and Gradual Disclosure}\label{sec:IGD}
When the principal is moderately impatient ($r_p$ is slightly larger than $r$), she optimally provides an initial disclosure to raise the agent's belief and then begins gradual revelation. To formalize this, we first define \emph{immediate and gradual disclosure} starting from an initial cutoff $q_0$, denoted $\operatorname{IGD}(q_0)$, and then characterize the optimal initial cutoff.

We begin by defining the ratio between the agent's loss and the principal's gain for any task $q\in\fkh{0,\overline{q}}$ that is not individually rational: $u\kh{q}\equiv \abs{v\kh{q}}/{w\kh{q}}$. Since the principal's bonus $b\kh{q}$ is assumed to be nondecreasing, so is $w\kh{q}=\frac{\lambda{b\kh{q}}}{r_p+\lambda}$. Moreover, since $\abs{v\kh{q}}$ decreases from $\abs{v\kh{0}}$ to zero as $q$ rises from zero to $\overline{q}$, the ratio $u\kh{q}$ is strictly decreasing and ranges from the positive value $u\kh{0}$ to zero. Using this ratio, we define a specific cutoff cascade policy as follows.
\begin{Def}[Immediate and gradual disclosure]\label{def:IGD}
$\operatorname{IGD}\kh{q_0}$ denotes the cutoff cascade policy of \emph{immediate and gradual disclosure} with initial cutoff quality $q_0\in\left(0,\overline{q}\right)$, specified by the cutoff quality function
\eqn{\tilde{q}^{\operatorname{IGD}(q_0)}\kh{s}=u^{-1}\kh{u\kh{q_0}e^{-\kh{r_p-r}s}},\label{eqn:qtildeIGD}}
where $u^{-1}\kh{\cdot}$ is the inverse function of $u$.
\end{Def}

Under this policy, the cutoff quality rises continuously from $q_0$ to $\overline{q}$ as time $s$ increases. Solving for the time at which each quality is revealed yields the agent's quitting time:
\eqn{\tau^{\operatorname{IGD}(q_0)}\kh{q}=\Brace{&0,&&q<q_0,\\&\frac{1}{r_p-r}\log\kh{\frac{u\kh{q_0}}{u\kh{q}}},&&q_0\le q<\overline{q},\\&\infty,&&q\ge \overline{q}.}\label{eqn:tauIGD}}
Equation~\eqref{eqn:tauIGD} is obtained by inverting the cutoff quality function \eqref{eqn:qtildeIGD}; these equivalent representations characterize the policy through the timing of disclosure and the schedule of quitting, respectively.

Note that if the principal sets $q_0=q^*$ (the cutoff in $\operatorname{KG}$), $\operatorname{IGD}(q^*)$ reveals strictly more information than $\operatorname{KG}$, thereby granting the agent a strictly positive ex ante payoff. The principal can improve her payoff by lowering the initial cutoff $q_0$ to extract this surplus, but $q_0$ is bounded below by zero.

The analysis therefore divides into two cases, depending on whether $\operatorname{IGD}(0)$ is individually rational. From equation \eqref{eqn:tauIGD}, the quitting time under $\operatorname{IGD}(0)$ for any quality $q\in \kh{0,\overline{q}}$ is $\tau^{\operatorname{IGD}(0)}(q)=\frac{1}{r_{p}-r} \log \left(\frac{u(0)}{u(q)}\right)$. As $r_p$  increases, this quitting time decreases, making the policy more favorable to the agent. When the principal is moderately impatient ($r_p$ is slightly larger than $r$), the quitting times under $\operatorname{IGD}(0)$ are long and the policy  violates individual rationality. When  the principal is highly impatient ($r_p$ is much larger than $r$), the quitting times are shorter and $\operatorname{IGD}(0)$ becomes individually rational.

If $\operatorname{IGD}(0)$ violates individual rationality, the principal chooses an optimal initial cutoff quality $q^{**}\in\kh{0,q^*}$ such that the agent's payoff from $\operatorname{IGD}\kh{q^{**}}$ equals exactly zero. 
\begin{Def}[Optimal immediate and gradual disclosure]
$\operatorname{OIGD}$ denotes the cutoff cascade policy of \emph{optimal  immediate and gradual disclosure}, specified by the cutoff quality function
$$\tilde{q}^{\operatorname{OIGD}}\kh{s}=\tilde{q}^{\operatorname{IGD}\kh{q^{**}}}\kh{s}=u^{-1}\kh{u\kh{q^{**}}e^{-\kh{r_p-r}s}},$$
where $u^{-1}\kh{\cdot}$ is the inverse function of $u$, and $q^{**}\in\kh{0,\overline{q}}$ is chosen such that the agent's ex ante expected payoff equals zero.	
\end{Def}
For such a moderately impatient principal, $\operatorname{OIGD}$ is optimal among all dynamic information policies.
\begin{Prop}\label{prop:gstar}
If $r_p> r$ and	$\operatorname{IGD}(0)$ is not individually rational, then $\operatorname{OIGD}$ is optimal among all dynamic information policies.
\end{Prop}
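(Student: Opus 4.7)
The plan is to mirror the strategy of the proof of Proposition~\ref{prop:opt}: relax the principal's problem to one with only the ex ante individual rationality (IR) constraint, solve the relaxed program, and verify that $\operatorname{OIGD}$ implements its solution via Lemma~\ref{lem:DGD}. Since $F$ has no atoms, it suffices to consider policies that induce a deterministic shirk schedule $\tau(q)$. The principal's payoff is $W(\tau) = \int_0^\infty w(\tau(q),q)\, dF(q)$ and the agent's ex ante payoff is $V(\tau) = \int_0^\infty v(\tau(q),q)\, dF(q)$, so a necessary condition for implementability is $V(\tau) \geq 0$. For $q \geq \overline{q}$, since $v(q), w(q) \geq 0$ and both $v(\tau,q), w(\tau,q)$ are increasing in $\tau$, setting $\tau(q) = \infty$ is without loss, mirroring Step~1 of the proof of Proposition~\ref{prop:opt}.

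For $q \in [0,\overline{q})$, I would make the change of variables $x(q) := 1 - e^{-(r+\lambda)\tau(q)} \in [0,1]$. Setting $\theta := (r_p+\lambda)/(r+\lambda) > 1$, we obtain $v(\tau(q),q) = v(q)\, x(q)$ (linear in $x$) and $w(\tau(q),q) = w(q)\bigl(1 - (1-x(q))^\theta\bigr)$ (strictly concave in $x$, since $\theta > 1$). The relaxed program becomes
\begin{align*}
\max_{x(\cdot)}\ \int_0^{\overline{q}} w(q)\bigl(1 - (1-x(q))^\theta\bigr) dF(q) \quad \text{s.t.} \quad \int_0^{\overline{q}} v(q)\, x(q)\, dF(q) \geq -\int_{\overline{q}}^\infty v(q)\, dF(q),
\end{align*}
which is a concave maximization with a linear constraint, so the KKT conditions are both necessary and sufficient.

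Next I would form the Lagrangian with multiplier $\eta \geq 0$ on IR and obtain the pointwise first-order condition for $x(q) \in (0,1)$:
\begin{align*}
w(q)\theta (1-x(q))^{\theta - 1} + \eta v(q) = 0 \quad \Longleftrightarrow \quad (1-x(q))^{\theta - 1} = \frac{\eta\, u(q)}{\theta}.
\end{align*}
Using $(\theta-1)(r+\lambda) = r_p - r$, this rewrites as $e^{-(r_p - r)\tau(q)} = \eta u(q)/\theta$, i.e., $\tau(q) = \frac{1}{r_p - r}\log\bigl(u(q_0)/u(q)\bigr)$ with $u(q_0) := \theta/\eta$, exactly matching the shirk schedule \eqref{eqn:tauIGD} of $\operatorname{IGD}(q_0)$; the boundary cases $\tau(q)=0$ for $q \leq q_0$ (when $\eta u(q)/\theta \geq 1$) and $\tau(q) = \infty$ as $q \uparrow \overline{q}$ (since $u(\overline{q})=0$) arise naturally from clipping at $x(q) \in \{0,1\}$. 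Since $\operatorname{IGD}(0)$ fails IR by assumption and $V(\operatorname{IGD}(q_0))$ is continuous and strictly monotone in $q_0$ (with $V(\operatorname{IGD}(q^*))>0$ by direct comparison with KG), the multiplier satisfies $\eta > 0$ and IR binds at the optimum, pinning down $q_0 = q^{**} \in (0, q^*)$ uniquely.

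Finally, I would verify implementability. The cutoff function associated with the optimal $\tau$ is $\tilde{q}^{\operatorname{OIGD}}(s) = u^{-1}\bigl(u(q^{**}) e^{-(r_p - r)s}\bigr)$, which is nondecreasing in $s$ (since $u^{-1}$ is decreasing and $u(q^{**}) e^{-(r_p - r)s}$ is decreasing) and takes values in $[q^{**}, \overline{q}] \subset [0, \overline{q}]$. Because IR holds with equality at the candidate, Lemma~\ref{lem:DGD} ensures the agent follows the recommendations of $\operatorname{OIGD}$, so $\operatorname{OIGD}$ attains the value of the relaxed program and is therefore optimal among all dynamic information policies. The main obstacle I expect is engineering the right change of variables to convert the non-concave $\tau$-problem into a concave $x$-problem (so that the pointwise KKT condition is both necessary and sufficient), and carefully tying the Lagrange multiplier $\eta$ back to the cutoff $q^{**}$ via the non-IR hypothesis on $\operatorname{IGD}(0)$; the boundary clipping and the implementability check through Lemma~\ref{lem:DGD} are then relatively routine.
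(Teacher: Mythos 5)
Your proof is correct, and it takes a genuinely different route from the paper. The paper establishes the relaxed reformulation (Lemma~\ref{lem:reform}) exactly as you do, but then invokes H\"older's inequality (Lemma~\ref{lem:reform3}) to show that any minimizer of the relaxed program must take the parametric form $\tau(q) = -\frac{1}{r+\lambda}\log(z_1/z_2) - \frac{1}{r_p-r}\log u(q)$ on $\left[y,\overline{q}\right)$, and then finishes by a direct monotonicity analysis in the three scalars $(y,z_1,z_2)$---showing first that any feasible $y$ satisfies $y\ge q^{**}$, and second that the reduced objective $h_2(y)$ is increasing in $y$. Your approach instead substitutes $x(q)=1-e^{-(r+\lambda)\tau(q)}$, which makes the IR constraint linear in $x$ and the objective concave (since $1-(1-x)^\theta$ is concave for $\theta>1$), and then derives the IGD form as the pointwise stationarity condition of the concave Lagrangian. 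The two are closely related---H\"older's equality case is essentially the proportionality condition $e^{-(r+\lambda)\tau(q)}w(q)^{\frac{r+\lambda}{r_p+\lambda}} \propto \left[-v(q)\right]w(q)^{-\frac{r+\lambda}{r_p+\lambda}}$, which is your FOC rearranged---but your convex-duality packaging makes it immediate \emph{why} the IGD form must emerge and eliminates the need for the separate parametric minimization over $(y,z_1,z_2)$. What the paper's route buys in exchange is that it sidesteps any appeal to infinite-dimensional KKT theory: everything is a finite-dimensional calculation after Lemma~\ref{lem:reform3}. Your argument also correctly handles the complementary-slackness step (ruling out $\eta=0$ via the negativity of $\int_0^\infty v\,dF$) and the pin-down of $q^{**}$ by monotonicity of $V(\operatorname{IGD}(q_0))$ in $q_0$, together with the implementability check through Lemma~\ref{lem:DGD}, so no essential step is missing.
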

The relationship between the information policies $\operatorname{KG}$ and $\operatorname{OIGD}$ is shown in Figure \ref{fig:OIGD1}, which corresponds to the case where $r_p$ is slightly larger than $r$, and $\operatorname{OIGD}$ is optimal. As $r_p$ approaches $r$ from above, the initial cutoff quality $q^{**}$ approaches  $q^*$ from below. Moreover, for any $q\in\kh{q^*,\overline{q}}$, the quitting time $\tau^{\operatorname{OIGD}}\kh{q}=\frac{1}{r_p-r}\log\kh{\frac{u\kh{q^{**}}}{u\kh{q}}}$ tends to infinity. Therefore, $\operatorname{OIGD}$ converges to $\operatorname{KG}$, demonstrating continuity in the optimal information policy as the principal's discount rate varies.

\subsection{Delayed Gradual Disclosure}\label{sec:DGD}
We have shown that when $\operatorname{IGD}(0)$ is not individually rational, the principal's optimal policy is $\operatorname{OIGD}$ \textendash{} immediate disclosure followed by gradual revelation. We now turn to the complementary case: when $\operatorname{IGD}(0)$ \emph{is} individually rational. This occurs when the principal is sufficiently impatient ($r_p$ is much larger than $r$), and in this case $\operatorname{IGD}(0)$ yields a strictly positive ex ante payoff for the agent. The principal can extract this surplus by delaying disclosure. To formalize this, we first define \emph{delayed gradual disclosure} starting at time $t$, denoted $\operatorname{DGD}(t)$, and then characterize the optimal delay time.
\begin{Def}[Delayed gradual disclosure]\label{def:DGD}
$\operatorname{DGD}(t)$ denotes the cutoff cascade policy of \emph{delayed gradual disclosure} starting at time $t$, specified by the cutoff quality function
\begin{equation}\tilde{q}^{\operatorname{DGD}(t)}\kh{s}=\Brace{&0,&&s<t,\\&u^{-1}\kh{u\kh{0}e^{-\kh{r_p-r}\kh{s-t}}},&&s\ge t,}\label{eqn:qtildeDGD}\end{equation}
where $u^{-1}\kh{\cdot}$ is the inverse function of $u$.
\end{Def}
Under this policy, the cutoff quality remains zero until time $t$, after which it rises continuously to $\overline{q}$. Solving for the time at which each quality is revealed yields the agent's quitting time:
\eqn{\tau^{\operatorname{DGD}(t)}\kh{q}=\Brace{&t+\frac{1}{r_p-r}\log\kh{\frac{u\kh{0}}{u\kh{q}}},&&q<\overline{q},\\&\infty,&&q\ge \overline{q}.}\label{eqn:tauDGD}}
Equation~\eqref{eqn:tauDGD} is obtained by inverting the cutoff quality function \eqref{eqn:qtildeDGD}; as $q$ rises from $0$ to $\overline{q}$, the agent's quitting time continuously rises from $t$ to infinity.

Let $\tilde{t}$ denote the \emph{maximum delay time}, i.e., the latest time such that $\operatorname{DGD}\kh{\tilde{t}}$ is individually rational for the agent. Since the agent's expected payoff decreases continuously in the delay time, $\tilde{t}$ is unique and finite.
\begin{Def}[Maximally delayed gradual disclosure]
$\operatorname{MDGD}$ denotes the cutoff cascade policy of \emph{maximally delayed gradual disclosure}, specified by the cutoff quality function
$$\tilde{q}^{\operatorname{MDGD}}\kh{s}=\tilde{q}^{\operatorname{DGD}\kh{\tilde{t}}}\kh{s}=\Brace{&0,&&s<\tilde{t},\\&u^{-1}\kh{u\kh{0}e^{-\kh{r_p-r}\kh{s-\tilde{t}}}},&&s\ge \tilde{t},}$$
where $u^{-1}\kh{\cdot}$ is the inverse function of $u$, and $\tilde{t}\ge 0$ is chosen such that the agent's ex ante expected payoff equals zero.	
\end{Def}
For a highly impatient principal, $\operatorname{MDGD}$ is optimal among all dynamic information policies.
\begin{Prop}\label{prop:gFDD}
If $r_p> r$ and	$\operatorname{IGD}(0)$ is individually rational, then $\operatorname{MDGD}$ is optimal among all dynamic information policies.\end{Prop}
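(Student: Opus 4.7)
The plan is to adapt the template of the proof of Proposition \ref{prop:opt} to the continuum of task qualities. I will set up a relaxed program in which the principal chooses a deterministic shirk schedule $\tau(\cdot)$ subject only to the agent's individual rationality (IR) constraint, solve it via Hölder's inequality, show that the minimizer coincides with $\tau^{\operatorname{MDGD}}$, and finally invoke Lemma \ref{lem:DGD} to establish implementability.

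First, mirroring Step 1 in the proof of Proposition \ref{prop:opt}, I argue that it is without loss to set $\tau(q)=\infty$ for all $q\ge \overline q$: since $v(q)\ge 0$ and $w(q)\ge 0$ for such $q$, sending $\tau(q)\to\infty$ relaxes IR and weakly raises the principal's objective. After this reduction I substitute $y(q)\equiv e^{-(r+\lambda)\tau(q)}\in(0,1]$ for $q\in[0,\overline q)$, so that $e^{-(r_p+\lambda)\tau(q)}=y(q)^{\alpha}$ with $\alpha\equiv(r_p+\lambda)/(r+\lambda)>1$. Using $v(\tau,q)=(1-e^{-(r+\lambda)\tau})v(q)$ and $w(\tau,q)=(1-e^{-(r_p+\lambda)\tau})w(q)$, the principal's problem reduces to
\[
\min_{y:[0,\overline q)\to(0,1]}\ \int_0^{\overline q}y(q)^\alpha w(q)\,dF(q)\quad\text{s.t.}\quad \int_0^{\overline q}y(q)\,|v(q)|\,dF(q)\ge B,
\]
where $B\equiv\int_0^{\overline q}|v(q)|\,dF(q)-\int_{\overline q}^\infty v(q)\,dF(q)>0$ by the standing assumption $\mathbb{E}_q[v(q)]<0$.

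The key step is Hölder's inequality with conjugate exponents $\alpha$ and $\alpha'\equiv\alpha/(\alpha-1)$. Factoring $y|v|=(yw^{1/\alpha})\cdot(|v|w^{-1/\alpha})$ and using the identity $|v|^{\alpha'}w^{-\alpha'/\alpha}=u^{\alpha'}w$ (which follows from $|v|=uw$ and $\alpha'(1-1/\alpha)=1$), Hölder yields
\[
B\;\le\;\int_0^{\overline q}y|v|\,dF\;\le\;\kh{\int_0^{\overline q}y^\alpha w\,dF}^{1/\alpha}\kh{\int_0^{\overline q}u^{\alpha'}w\,dF}^{1/\alpha'}.
\]
This bounds the objective from below, with equality exactly when $y(q)\propto u(q)^{1/(\alpha-1)}$; since $1/(\alpha-1)=(r+\lambda)/(r_p-r)$, inverting back to $\tau(q)$ gives
\[
\tau(q)\;=\;\tilde t+\frac{1}{r_p-r}\log\kh{\frac{u(0)}{u(q)}},
\]
which is exactly $\tau^{\operatorname{MDGD}}(q)$. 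The free constant (equivalently $\tilde t$) is pinned down by binding IR, and the hypothesis that $\operatorname{IGD}(0)$ is individually rational guarantees $\tilde t\ge 0$, so the Hölder minimizer lies in the feasible region $y(q)\le 1$ for all $q$.

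Finally, I verify implementability: the cutoff function $\tilde q^{\operatorname{MDGD}}(s)$ is nondecreasing and takes values in $[0,\overline q]$, and $\operatorname{MDGD}$ is individually rational by construction of $\tilde t$, so Lemma \ref{lem:DGD} implies the agent's best response is to follow $\operatorname{MDGD}$. Thus the shirk schedule solving the relaxed program is actually induced by $\operatorname{MDGD}$, establishing optimality. The main obstacle is justifying the use of a relaxed program that imposes only IR when the set of implementable shirk schedules is a priori much smaller; this is resolved entirely by Lemma \ref{lem:DGD}, which certifies that the candidate policy is incentive compatible and hence that the Hölder lower bound is actually attained.
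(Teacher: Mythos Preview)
Your argument is correct and takes a somewhat more direct route than the paper's. The paper proceeds via Lemma~\ref{lem:reform3}, which applies H\"older on the truncated interval $[y,\overline q]$ and reduces the relaxed program to a three-parameter optimization over $(y,z_1,z_2)$; the proof of Proposition~\ref{prop:gFDD} then shows $y^*=0$ by reusing the derivative computation ($h_2'\ge 0$) from Step~2 of the proof of Proposition~\ref{prop:gstar}. You instead apply H\"older on the full interval $[0,\overline q)$ in one stroke, and use the hypothesis ``$\operatorname{IGD}(0)$ is individually rational'' precisely to guarantee that the unconstrained H\"older minimizer is feasible, i.e.\ satisfies $y(q)\le 1$ for all $q$ (equivalently $\tilde t\ge 0$). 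This is a clean shortcut that works only in the MDGD regime; the paper's heavier $(y,z_1,z_2)$ machinery is what is actually needed for Proposition~\ref{prop:gstar}, where the box constraint $y(q)\le 1$ binds and forces the cutoff structure with $y^*=q^{**}>0$.

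Two small points worth making explicit. First, your restriction to deterministic $\tau(q)$ should be justified: the paper does this at the start of Appendix~\ref{app:proofgen} by invoking the full-support, atomless assumption on $F$. Second, your domain $(0,1]$ for $y$ excludes $y(q)=0$ (i.e.\ $\tau(q)=\infty$), but this is harmless because H\"older holds for all $y\in[0,1]$, and the candidate minimizer has $y(q)>0$ anyway.
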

\begin{figure}[!htbp]\centering
\begin{subfigure}[t]{0.49\textwidth}
\includegraphics[width=\textwidth]{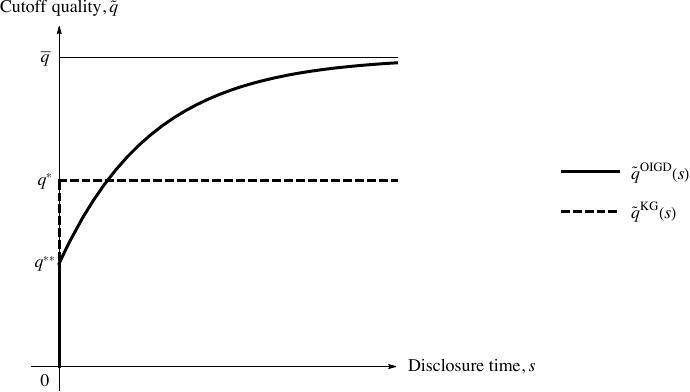}
\caption{Optimal immediate and gradual disclosure ($\operatorname{OIGD}$).}\label{fig:OIGD1}
\end{subfigure}\hfill
\begin{subfigure}[t]{0.49\textwidth}
\includegraphics[width=\textwidth]{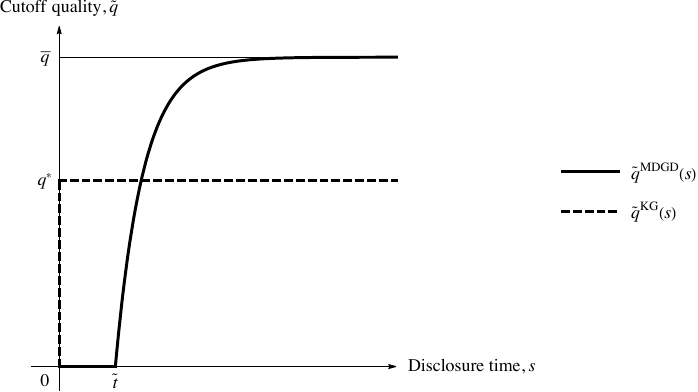}
\caption{Maximally delayed gradual disclosure ($\operatorname{MDGD}$).}	\label{fig:MDGD1}
\end{subfigure}
\caption{Relationship among $\operatorname{KG}$, $\operatorname{OIGD}$ and $\operatorname{MDGD}$.}\label{fig:MDGD}
\end{figure}
The relationship between $\operatorname{KG}$ and $\operatorname{MDGD}$ is shown in Figure \ref{fig:MDGD1}, corresponding to the case where  $r_p$ is much larger than $r$ and $\operatorname{MDGD}$ is optimal. As $r_p$ increases beyond the range where  $\operatorname{OIGD}$ is optimal, the principal shifts from lowering the initial cutoff quality to introducing a delay before disclosure begins. The notable difference from the binary case (Section~\ref{sec:bin}) is that even when delay is optimal, the policy involves gradual rather than one-shot disclosure. The principal customizes disclosure times for different quality realizations, extracting more value than any one-shot delayed policy.

\section{Binary-Nonstationary Case}\label{sec:Poi}
We now move to nonstationary environments by allowing the task completion rate to vary with task quality, while maintaining the binary quality assumption from Section \ref{sec:bin}. In this setting, the mere passage of time conveys information about the task's quality.

The analysis divides into two cases, depending on whether the completion rate of the high-quality task is larger or smaller than that of the low-quality task. We focus on the \emph{pessimistic case} ($\lambda_H>\lambda_L$), where the high-quality task has a shorter expected duration than the low-quality task. In this case,   ``no news is bad news'' \textendash{} as time passes without completion, the agent becomes increasingly pessimistic about the task's quality. The analysis reveals a new channel that makes dynamic disclosures valuable: the agent becomes  \emph{pessimistic} over time.\footnote{By contrast, in the \emph{optimistic case} ($\lambda_H< \lambda_L$) where ``no news is good news", all the results from Section \ref{sec:bin}  carry over directly: the principal benefits from dynamic disclosure if and only if she is less patient than the agent ($r_p>r$), and $\operatorname{MDD}$ is optimal for an impatient principal. In fact, if the agent starts working, he will never find it optimal to quit early, so the optimistic case reduces to an essentially stationary problem.}

In the pessimistic case, the principal benefits from dynamic disclosure \emph{regardless of the relationship between $r_p$ and $r$}. However, Proposition \ref{prop:optlearn} shows that the structure of the optimal information policy still depends on the patience levels. While $\operatorname{MDD}$ remains optimal for an impatient principal ($r_p>r$),   a patient principal ($r_p\le r$) optimally employs \emph{Poisson disclosure}, where full disclosure arrives at the calibrated Poisson rate $\kh{\lambda_H-\lambda_L}$ when the task has low quality. This calibrated rate ensures that the absence of disclosure neither raises nor lowers the agent's beliefs \textendash{} effectively restoring stationarity.

The section proceeds as follows. Subsection~\ref{sec:agent} introduces the critical belief threshold $\hat{\mu}$ by analyzing the agent's behavior in the absence of information provision; this threshold shapes the structure of optimal policy throughout the section. Subsection~\ref{sec:free} establishes the Free-Riding Lemma (Lemma~\ref{lem:free}), showing that for high priors ($\mu > \hat{\mu}$), the principal optimally remains silent until the belief decays to $\hat{\mu}$. Subsection~\ref{sec:poisson} introduces Poisson disclosure, which stabilizes the agent's belief by offsetting the natural pessimistic drift. Subsection~\ref{sec:optPoi} states the main optimality result: the form of optimal policy still depends crucially on relative patience.

\subsection{Agent Behavior and the Critical Belief Threshold}\label{sec:agent}
In the pessimistic case, when a task remains unfinished for an extended period, the agent infers that it is likely of low quality. This creates a fundamental challenge: even an agent who begins with high beliefs about the task may eventually lose faith and quit before completion, creating a dynamic problem that the principal must solve.

We first consider the decision problem of an \emph{uninformed} agent \textendash{} one who receives no disclosure from the principal. This benchmark characterizes behavior in the absence of any information policy and will be essential for understanding the value of disclosure. Bayes' rule implies that if the task is not completed at time $t\ge 0$, the agent's belief $\mu_{t}$ is given by
\begin{align}
\mu_{t} \equiv \mathbb{P}(q=H \mid x>t) &= \frac{\mathbb{P}(x>t \mid q=H)\mathbb{P}(q=H)}{\mathbb{P}(x>t \mid q=H)\mathbb{P}(q=H)+\mathbb{P}(x>t \mid q=L)\mathbb{P}(q=L)} \notag \\
&= \frac{\mu}{\mu+e^{(\lambda_H-\lambda_L)t}(1-\mu)}, \label{eqn:mu_t}
\end{align}
which decreases over time since $\lambda_H>\lambda_L$. Let $\hat{\mu}$ denote threshold belief where
\eqn{\hat{\mu}\left(r+\lambda_{H}\right) v(H)+\left(1-\hat{\mu}\right)\left(r+\lambda_{L}\right) v(L)=0.\label{eqn:muhat}}
The threshold $\hat{\mu}$ plays a central role in all subsequent results, determining both the agent's behavior and the structure of optimal policy throughout this section. If the agent's prior is below $\hat{\mu}$, he quits immediately. Otherwise, he begins working, but with a contingency plan: if the task remains incomplete for too long, his belief will drop to $\hat{\mu}$, at which point it becomes optimal to quit.

Let $\overline{t}(\mu)$ denote the time when the agent's belief falls to $\hat{\mu}$ starting from a prior $\mu>\hat{\mu}$. Setting $\mu_t=\hat{\mu}$ in equation~\eqref{eqn:mu_t} and solving yields:
\eqn{\frac{\mu}{\mu+e^{\kh{\lambda_H-\lambda_L}\overline{t}(\mu)}\kh{1-\mu}}=\hat{\mu}\quad\iff\quad \overline{t}(\mu)=\frac{1}{\lambda_{H}-\lambda_{L}}\left(\log \frac{\mu}{1-\mu}-\log \frac{\hat{\mu}}{1-\hat{\mu}}\right)\label{eqn:tbar}.}
Lemma \ref{lem:tbar} characterizes the agent's behavior in the absence of information provision.
\begin{Lem}\label{lem:tbar}
Suppose that the principal does not provide any information to the agent.
\begin{enumerate}
\item If $\mu \le\hat{\mu}$, then the agent quits immediately.
\item If $\mu >  \hat{\mu}$, then the agent starts working but plans to quit at time $\overline{t}\kh{\mu}$ if the task remains incomplete.	
\end{enumerate}
\end{Lem}
Beyond characterizing when the uninformed agent quits, $\hat{\mu}$ determines the structure of optimal disclosure \textendash{} in particular, whether the optimal policy involves delay. As we formalize in the Free-Riding Lemma (Lemma~\ref{lem:free}), when the agent's prior exceeds $\hat{\mu}$, the principal optimally remains silent until the belief decays to $\hat{\mu}$, free-riding on the agent's willingness to continue without information provision.

The threshold $\hat{\mu}$ also reveals the limitations of static disclosure, demonstrating the need for dynamic information provision. Under a static policy, the agent may still quit the high-quality task before completion. For example, Figure \ref{fig:WKGl} illustrates the optimal static information policy (which we still call $\operatorname{KG}$).\begin{figure}[!htbp]\centering
\includegraphics[height=0.3\textheight]{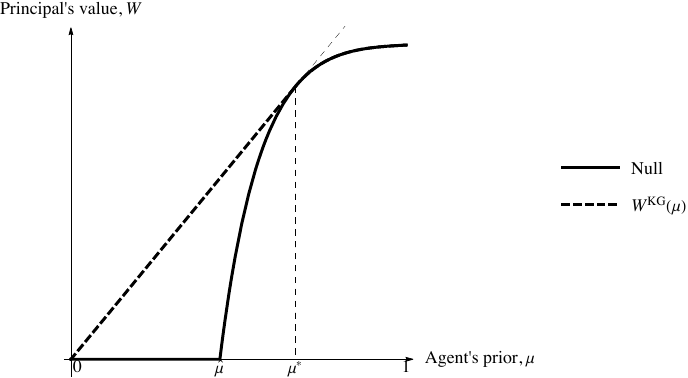}
\caption{Optimal static information policy $\operatorname{KG}$ corresponds to a concavification.}\label{fig:WKGl}
\end{figure}
The solid curve represents the principal's value without persuasion. For $\mu\le \hat{\mu}$, the agent quits immediately, and the principal's value is zero. As $\mu$ rises from $\hat{\mu}$ to one, the agent works increasingly longer, so the principal's value also rises. The dashed line segment, formed by the concavification of the solid curve, represents the principal's gain from persuasion. In particular, $\operatorname{KG}$  raises the agent's belief to some $\mu^*>\hat{\mu}$ to induce working, but the agent still plans to quit at time $\overline{t}\kh{\mu^*}$ if the task has not been completed by then. Under any static information policy, the only way to guarantee completion of the high-quality task is initial full disclosure (to push the belief to one), which is too costly.

Dynamic disclosure overcomes this limitation. Continuing the above example, suppose that in addition to the initial disclosure that motivates the agent to start working, the principal also commits to fully disclose the task quality at time $\overline{t}\kh{\mu^*}$ (the agent's planned quitting time) if the task is still incomplete. If the task has high quality, the agent will learn this at $\overline{t}\kh{\mu^*}$ and thus continue working until eventual completion. If the task has low quality, the principal's disclosure at $\overline{t}\kh{\mu^*}$ simply causes the agent to quit (which he was going to do anyway). Crucially, the agent knows this disclosure plan from the start, so he is willing to work until time $\overline{t}\kh{\mu^*}$ since the dynamic plan only strengthens his incentives to continue compared to the static policy. For the same reason, following this information policy yields a strictly positive ex ante payoff for the agent, giving the principal room for improvement.

\subsection{The Free-Riding Lemma}\label{sec:free}
We now analyze the optimal structure of dynamic information disclosure. The critical step in this nonstationary setting is identifying the binding incentive constraints within the continuum of constraints the agent faces over time. A key insight emerges: if the agent has a high prior ($\mu >  \hat{\mu}$), the incentive constraint at time $\overline{t}\kh{\mu}$ implies all earlier constraints, regardless of the information policy. Therefore, it is optimal for the principal to provide no information and ``free-ride'' until the belief falls to $\hat{\mu}$. This observation is formalized in the Lemma \ref{lem:free} below.
\begin{Lem}[Free-Riding]\label{lem:free}
If $\mu >  \hat{\mu}$, any optimal information policy provides no information at all $t < \overline{t}(\mu)$. That is, the agent never quits and is never told to quit before the belief naturally decays to $\hat{\mu}$.
\end{Lem}
In light of this Free-Riding Lemma, once we obtain the optimal information policy for prior $\mu=\hat{\mu}$, we can solve the problem for any $\mu>\hat{\mu}$ by translating the time axis: relabel time  $\overline{t}\kh{\mu}$ as time zero and adapt the optimal policy accordingly. We therefore focus on characterizing the optimal policy at or below the critical prior $\hat{\mu}$.

\subsection{Poisson Disclosure}\label{sec:poisson}
To sustain the agent's effort at the critical belief $\hat{\mu}$, the principal must prevent the belief from drifting downward. Consider the following policy: instead of setting a deterministic disclosure time, the principal discloses the task quality at Poisson rate $\kh{\lambda_H-\lambda_L}$ when the task has low quality. We first show that this  \emph{Poisson disclosure} stabilizes the agent's belief over time.

Suppose that the agent neither receives disclosure nor completes the task during a small time interval $\Delta{t}$. This event occurs with probability $\kh{1-\lambda_H\Delta{t}}$ if the task quality is high, and with probability $$\kh{1-\lambda_L\Delta{t}}\kh{1-\kh{\lambda_H-\lambda_L}\Delta{t}}=1-\lambda_H\Delta{t}$$ if the task quality is low. Since the event occurs is equally likely under both task qualities, Bayes' rule implies that the agent does not update his belief. In this way, ``no news is no news'', and the stationarity of the problem is restored. Thus, Poisson disclosure plays the same role as keeping silence in the stationary case analyzed in Section \ref{sec:bin}.

If $\mu < \hat{\mu}$, Poisson disclosure at rate $(\lambda_H - \lambda_L)$ alone cannot induce the agent to work. The principal addresses this by combining Poisson disclosure with an initial announcement that raises the agent's belief to exactly $\hat{\mu}$ \textendash{} the threshold at which he is just willing to start. This initial announcement follows the same logic as $\operatorname{KG}$: it pools high-quality with low-quality to boost the agent's belief.  Once the agent begins working, his belief remains constant at $\hat{\mu}$ as long as no disclosure arrives, so he continues. He quits only when informed that the task has low quality. We refer to this information policy as \emph{immediate  and Poisson disclosure}, denoted by $\operatorname{IPD}$. 
\begin{Def}[Immediate  and Poisson disclosure]\label{def:IPD}
For $\mu\le \hat{\mu}$, $\operatorname{IPD}$ denotes the information policy of \emph{immediate and Poisson disclosure} where 
\begin{enumerate}
    \item At time zero, the agent is told to continue ($\sigma_0=0$) with probability one when $q=H$, and also with positive probability when $q=L$, such that $\mathbb{P}(q=H \mid \sigma_0=0)=\hat{\mu}$.
    \item For $s>t$, conditional on $\sigma_t=0$, the recommendation switches to quit ($\sigma_s=1$) according to a Poisson arrival with rate $(\lambda_H-\lambda_L)$ when $q=L$. Formally, $\mathbb{P}(\sigma_s=1 \mid \sigma_t=0) = \kh{1-e^{-(\lambda_H-\lambda_L)(s-t)}}\cdot\indic{q=L}$.
\end{enumerate}
\end{Def}
Under $\operatorname{IPD}$, once the agent starts working, the agent's quitting time follows an exponential distribution with rate $(\lambda_H-\lambda_L)$ when the task has low quality, whereas he never quits when the task has high quality. Part~\eqref{1a} of Proposition \ref{prop:optlearn} below shows that when the principal is more patient than the agent ($r_p\le r$), $\operatorname{IPD}$ is optimal among all dynamic information policies for low priors ($\mu\le \hat{\mu}$).

For high priors ($\mu > \hat{\mu}$), the optimal policy for a patient principal is constructed by combining this result with the Free-Riding Lemma (Lemma \ref{lem:free}): the principal provides no information until time $\overline{t}(\mu)$ and switches to IPD thereafter. This constitutes \emph{delayed Poisson disclosure}, denoted by $\operatorname{DPD}$, where Poisson disclosure begins only after $\overline{t}\kh{\mu}$.
\begin{Def}[Delayed Poisson disclosure]\label{def:DPD}
For $\mu>\hat{\mu}$, $\operatorname{DPD}$ denotes the information policy of \emph{delayed Poisson disclosure}. The principal provides no information until $\overline{t}\kh{\mu}$, and then discloses at Poisson rate $\kh{\lambda_H-\lambda_L}$ if $q=L$. Formally, $\sigma_s=0$ for all $s\le \overline{t}\kh{\mu}$, and for any $s > t \ge \overline{t}\kh{\mu}$, $ \mathbb{P}\kh{\sigma_s=1 \mid \sigma_t=0} = \kh{1 - e^{-\kh{\lambda_H-\lambda_L}\kh{s-t}}} \cdot \indic{q=L}. $
\end{Def}

\subsection{Optimal Information Policy}\label{sec:optPoi}
We now state the main result characterizing the optimal policy, Proposition~\ref{prop:optlearn}. While Poisson disclosure achieves belief stationarity, the principal is not restricted to stochastic disclosure; deterministic policies (such as $\operatorname{MDD}$ from Section \ref{sec:bin}) remain available. Part~\ref{2} of Proposition~\ref{prop:optlearn} establishes that for an impatient principal ($r_p>r$), $\operatorname{MDD}$ is optimal, paralleling Proposition \ref{prop:opt}. This highlights that while dynamic disclosure is always valuable in a pessimistic environment, the \textit{form} of optimal disclosure still depends crucially on relative patience.

For any $\mu\le\hat{\mu}$, the maximum delay time $\tilde{t}(\mu)$ remains defined by equation \eqref{eqn:ttilde}. For $\mu>\hat{\mu}$, the Free-Riding Lemma (Lemma \ref{lem:free}) implies that persuasion only takes effect from time $\overline{t}\kh{\mu}$ onward. Therefore, the maximum delay time for a high prior is the free-riding duration plus the standard maximum delay at $\hat{\mu}$: 
\eqn{\tilde{t}\kh{\mu}=\overline{t}\kh{\mu}+\tilde{t}\kh{\hat{\mu}}\quad\text{for }\mu>\hat{\mu}.\label{eqn:ttilde1}}
With $\tilde{t}(\mu)$ now defined for all $\mu \in (0,1)$, we extend the Definition~\ref{def:FDD} of MDD to this nonstationary setting:
\begin{Defprime}{def:FDD}[Maximally delayed disclosure]\label{def:MDDl}
$\operatorname{MDD}$ denotes the information policy of \emph{maximally delayed disclosure}, where   for any $\mu\in\kh{0,1}$, the principal provides no information prior to the time $\tilde{t}(\mu)$, and fully discloses the task quality at $\tilde{t}(\mu)$. Formally, $\sigma_s = 0$ for all $s < \tilde{t}(\mu)$, and for all $s \ge \tilde{t}(\mu)$, $\sigma_s = \indic{q=L}$.
\end{Defprime}
Proposition \ref{prop:optlearn} establishes the optimality of these information policies.\footnote{We thank an anonymous referee for pointing out the connection between Proposition \ref{prop:optlearn} (part \ref{1}) and Proposition 8 in Section VI of \cite{ES20}, as both results feature exponentially distributed time lotteries with calibrated parameter. However, neither result directly nests the other. Section VI of \cite{ES20} considers the problem where the principal can both \textit{design} the distribution of task durations and \textit{observe} its realization before choosing an optimal information policy. In our model, the principal can neither choose nor ex ante observe the task duration. Nonetheless, in the special case of $\lambda_L=0$ and $\mu=\hat{\mu}$, one can deduce from Lemma 3 and Proposition 8 in \cite{ES20} that if $r_p\le r$, then the time lottery induced by Poisson disclosure at rate $\lambda_H$ (which equals $\lambda_H-\lambda_L$ when $\lambda_L=0$) is optimal in our model.} When $r_p\le r$, either $\operatorname{IPD}$  or $\operatorname{DPD}$ is optimal  depending on the prior. When $r_p>r$, $\operatorname{MDD}$ is optimal.
\begin{Prop}\label{prop:optlearn}In the pessimistic case with $\lambda_H>\lambda_L$,
\begin{enumerate}
\item\label{1} If $r_p\le r$, then the optimal policy involves Poisson disclosure.
\begin{enumerate}
\item\label{1a} If $\mu\le  \hat{\mu}$, then $\operatorname{IPD}$  is optimal among all dynamic information policies.
\item\label{1b} If $\mu>  \hat{\mu}$, then $\operatorname{DPD}$ is optimal among all dynamic information policies.
\end{enumerate}
\item\label{2} If $r_p> r$, then $\operatorname{MDD}$ is optimal among all dynamic information policies.
\end{enumerate}	
\end{Prop}
The intuition mirrors the stationary case (Subsection~\ref{subsec:opt}): the optimal policy is determined by the parties' risk attitudes towards time lotteries. The key difference here is the natural pessimistic drift in the agent's belief absent disclosure. Any information policy that ensures completion of the high-quality task must reveal low quality at a rate of at least $(\lambda_H-\lambda_L)$ to offset this downward drift (i.e., to make ``no news no news''). The principal can choose to disclose at a faster rate, with full disclosure at a deterministic time representing the limiting case of an infinite revelation rate.

The structure of binding constraints differs between the two cases. Under Poisson disclosure ($\operatorname{IPD}$ or $\operatorname{DPD}$), multiple dynamic obedience constraints bind simultaneously: throughout the Poisson phase, the agent's belief remains constant at $\hat{\mu}$, so his continuation value equals zero at every instant. By contrast, under $\operatorname{MDD}$, only a single constraint binds: the individual rationality constraint when $\mu \le \hat{\mu}$, or the time-$\overline{t}(\mu)$ constraint when $\mu > \hat{\mu}$ (as implied by the Free-Riding Lemma). In either case, the remaining constraints hold with slack. This difference shapes the proof strategies: establishing optimality of Poisson disclosure requires working with a continuum of incentive constraints and showing that all of them bind at optimality, whereas for $\operatorname{MDD}$, it suffices to identify a single binding constraint and verify that all others are slack.

Figure \ref{fig:WPD} depicts the relationship among $\operatorname{IPD}$/$\operatorname{DPD}$, $\operatorname{MDD}$ and $\operatorname{KG}$ when $r_p<r$.
\begin{figure}[!htbp]
\begin{subfigure}[t]{0.49\textwidth}
\includegraphics[width=\textwidth]{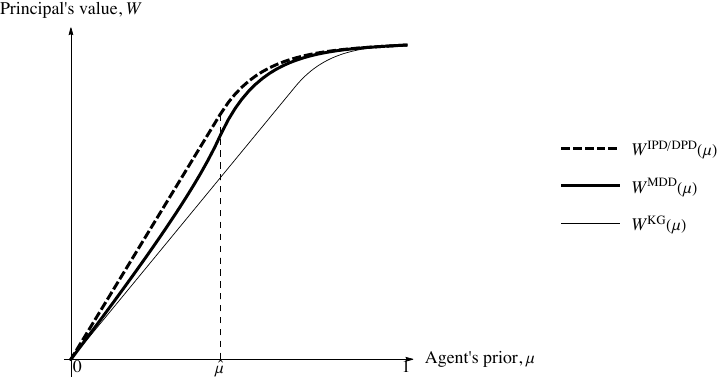}
\caption{$r_p<r$.}\label{fig:WPD}
\end{subfigure}\hfill
\begin{subfigure}[t]{0.49\textwidth}
\includegraphics[width=\textwidth]{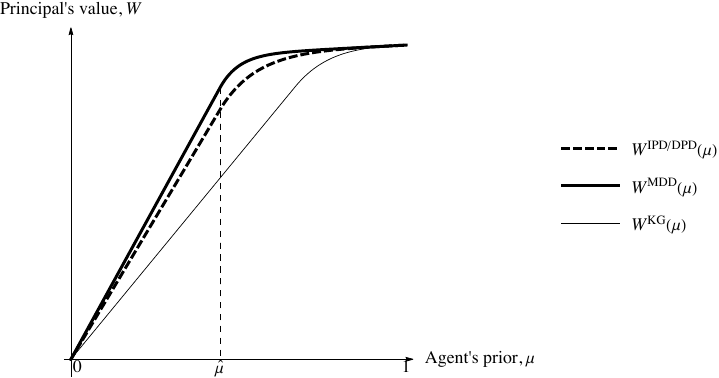}
\caption{$r_p>r$.}\label{fig:WPD1}
\end{subfigure}
\caption{The principal's payoff comparison among $\operatorname{IPD}$/$\operatorname{DPD}$, $\operatorname{MDD}$ and $\operatorname{KG}$.}
\end{figure} 
Both the Poisson policy  and $\operatorname{MDD}$ strictly outperform $\operatorname{KG}$ for all $\mu \in (0,1)$. When $r_p<r$, the agent's risk attitude  dominates, so the optimal lottery has the highest risk. Among policies ensuring incentive compatibility, the Poisson policy  induces the highest-risk time lottery in the sense of second-order stochastic dominance. Conversely, when $r_p>r$, the principal's risk attitude dominates, making the optimal lottery degenerate at a fixed time $\tilde{t}\kh{\mu}$, induced by $\operatorname{MDD}$. As Figure \ref{fig:WPD1} shows, both dynamic policies still improve upon $\operatorname{KG}$, but Poisson policy becomes suboptimal here  due to the excessive risk it imposes.

\section{Conclusion}\label{sec:concln}
In this paper, we study the optimal mechanism to motivate effort in a dynamic moral hazard model without transfers. We develop a unified framework and characterize the optimal policy in closed form, identifying two precise conditions under which dynamic disclosure is strictly valuable to the principal: one is that the principal is impatient compared with the agent, and the other is that the agent would become pessimistic over time absent any disclosure.

Our results reveal distinct optimal policy structures across different environments. In binary-stationary settings, an impatient principal optimally employs maximally delayed disclosure \textendash{} withholding all information until a specific threshold before complete revelation \textendash{}  while a patient principal prefers static disclosure. With general quality distributions, the optimal policy for an impatient principal becomes a cutoff cascade, where the principal gradually reveals whether the reward exceeds a time-varying threshold. In pessimistic environments, maximally delayed disclosure remains optimal for an impatient principal, while a patient principal chooses Poisson disclosure \textendash{}  releasing information at a calibrated rate that exactly offsets the downward drift in the agent's belief. Crucially, our unified framework synthesizes these policy forms by identifying which constraints bind: static individual rationality in stationary environments versus dynamic obedience in pessimistic ones.

Our analysis also emphasizes how players' risk attitudes toward time lotteries shape optimal information policies. Under exponential discounting, these risk attitudes reduce to simple comparisons of discount rates. With other time preferences, however, this relationship becomes considerably richer. This connection between dynamic information design and time preferences suggests a promising direction for future research, particularly in environments where standard exponential discounting may not hold.

Beyond time preferences, Table~\ref{tab:summary} points to another natural direction for future research: the general nonstationary case, where the quality is continuously distributed and the task completion rate varies with quality. This setting combines the challenges of continuous quality (requiring gradual disclosure) with time-varying beliefs (requiring dynamic obedience constraints). Characterizing closed-form solutions in such environments remains open.

% Bibliography
\bibliographystyle{myref}
\bibliography{ME.bib}

\renewcommand{\theequation}{\thesection.\arabic{equation}}
\setcounter{equation}{0}
\renewcommand{\theLem}{\thesection.\arabic{Lem}}
\setcounter{Lem}{0}
\renewcommand{\theDef}{\thesection.\arabic{Def}}
\setcounter{Def}{0}
\renewcommand{\theProp}{\thesection.\arabic{Prop}}
\setcounter{Prop}{0}

% Appendix
\appendix
\newpage
\setstretch{1}
\section{Proofs of Results in the Main Text}\label{app:proof}
For any information policy $\bm\sigma$, let $\tau$ denote the induced random quitting time, and let $\tau\kh{q}$ denote the conditional distribution of $\tau$ given task quality $q$.
\subsection{Proofs for Section \ref{sec:bin}}
\begin{proof}[Proof of Lemma~\ref{lem:1}]
We first express the principal's payoff from KG as a function of the agent's prior $\mu$. 
For $\mu<\overline{\mu}$, the KG policy induces a posterior of exactly $\overline{\mu}$ with maximum probability, and a posterior of 0 otherwise. By the standard concavification result, the probability of inducing posterior $\overline{\mu}$ equals $\mu /\overline{\mu}$. Thus,
\eqn{W^{\operatorname{KG}}\kh{\mu} = \kh{\mu /\overline{\mu}}\cdot \left( \overline{\mu} w(H) + (1-\overline{\mu}) w(L) \right)=\mu w\kh{H}+\mu\cdot\frac{1-\overline{\mu}}{\overline{\mu}}w\kh{L}\quad\text{if }\mu<\overline{\mu}.\label{eqn:WKG}}
For $\mu\ge\overline{\mu} $, the agent always starts out completing the task, yielding
$$W^{\operatorname{KG}}\kh{\mu}=\mu w\kh{H}+\kh{1-\mu}w\kh{L}\quad\text{if } \mu\ge\overline{\mu}.$$

Now we consider the principal's payoff from $\operatorname{DD}\kh{t}$. If the agent's prior is too low (below $\tilde{\mu}\kh{t}$), he quits immediately and the principal's payoff is zero, strictly worse than KG. If the prior exceeds $\tilde{\mu}\kh{t}$, $\operatorname{DD}(t)$ induces $\tau\kh{H}=\infty$ and $\tau\kh{L}=t$. Thus, the principal's payoff in this range is 
\eqn{W^{\operatorname{DD}\kh{t}}\kh{\mu}=\mu w\kh{H}+\kh{1-\mu }w\kh{t,L}\quad\text{if }\mu\ge \tilde{\mu}\kh{t},\label{eqn:WDDt}}
where $\tilde{\mu}\kh{t}<\overline{\mu}$. Since $w\kh{t,L}<w\kh{L}$ for all finite $t$, we have $W^{\operatorname{DD}(t)}(\mu)<W^{\operatorname{KG}}\kh{\mu}$ for $\mu\ge\overline{\mu}$ . The only range where $\operatorname{DD}(t)$ can outperform KG is $\left[\tilde {\mu}\kh{t},\overline{\mu}\right)$. Comparing equation \eqref{eqn:WDDt} with equation \eqref{eqn:WKG}, for any $\mu\in \left[\tilde {\mu}\kh{t},\overline{\mu}\right)$, we have
\eqns{W^{\text {DD}(t)}\kh{\mu}>W^{\text{KG}}\kh{\mu}\quad&\iff\quad\kh{1-\mu }w\kh{t,L}>\mu\cdot\frac{1-\overline{\mu}}{\overline{\mu}}w\kh{L}\\
&\iff\quad\kh{1-\mu}\kh{1-e^{-\kh{r_p+\lambda}t}}>\mu\cdot\frac{1-\overline{\mu}}{\overline{\mu}},}
where the last equivalence follows from the principal's payoff function, equation \eqref{eqn:w}. The last inequality imposes an upper bound on $\mu$, so we need the strict inequality to hold at $\mu=\tilde{\mu}(t)$, i.e., 
\eqn{\kh{1-\tilde{\mu}(t)}\kh{1-e^{-\kh{r_p+\lambda}t}}>\tilde{\mu}(t)\cdot\frac{1-\overline{\mu}}{\overline{\mu}}.\label{eqn:A6}}
Comparing equations \eqref{eqn:mubar} and \eqref{eqn:mutilde},  which define the threshold priors $\overline{\mu}$ and $\tilde{\mu}(t)$, we get
\eqns{\kh{1-\tilde{\mu}(t)}\kh{1-e^{-\kh{r+\lambda}t}}=\tilde{\mu}(t)\frac{v\kh{H}}{\fkh{-v\kh{L}}}=\tilde{\mu}(t)\cdot\frac{1-\overline{\mu}}{\overline{\mu}}.}
Therefore, strict inequality \eqref{eqn:A6} holds if and only if
\eqns{1-e^{-\kh{r_p+\lambda}t}>1-e^{-\kh{r+\lambda}t}\quad\iff\quad r_p>r.}
This completes the proof.
\end{proof}

\begin{proof}[Proof of Proposition~\ref{prop:opt}]
For $\mu\ge \overline{\mu}$, both KG and MDD provide no information, and the agent completes the task with probability one without any persuasion.  This is clearly optimal. Henceforth, we assume $\mu<\overline{\mu}$.  

We introduce a change of variables to linearize the principal's objective function.\footnote{We thank an anonymous referee for suggesting this change of variables that streamlines the proof.} Since the principal's value from quitting time $\tau$ is proportional to $1 - e^{-(r_p + \lambda)\tau}$, we define the transformed variable $\zeta$ as $\zeta\equiv 1 - e^{-(r_p + \lambda)\tau}\in[0,1]$. In this transformed space,  $\zeta = 0$ corresponds to immediate quitting  ($\tau = 0$), while $\zeta=1$ corresponds to never quitting ($\tau=\infty$). The principal's value then becomes $\mathbb{E}_{\bm\sigma}\fkh{w\kh{\tau,q}}=\mathbb{    E}_{\bm\sigma}\fkh{w\kh{q}\cdot \zeta}$.

Next we express the agent's value in terms of $\zeta$. The agent's value from quitting time $\tau$ is proportional to $1 - e^{-(r + \lambda)\tau}=1-(1-\zeta)^{\frac{r+\lambda}{r_p+\lambda}}$. Define $\rho\equiv \frac{r_p+\lambda}{r+\lambda}$ as the ratio of the principal's effective discount rate to the agent's, and let $\phi(z)\equiv 1-(1-z)^{1/\rho}$. The agent's value can then be written as  $\mathbb{E}_{\bm\sigma}\fkh{v\kh{\tau,q}}=\mathbb{E}_{\bm\sigma}\fkh{v\kh{q}\cdot \phi(\zeta)}$. Note that $\phi'(z) = \frac{1}{\rho} (1-z)^{1/\rho -1}>0$, so $\phi(z)$ is strictly increasing, and $\phi''(z) = \frac{\rho-1}{\rho^2} (1-z)^{1/\rho-2}$, whose sign depends entirely on $\rho - 1$, i.e., whether the principal is more patient than the agent ($r_p>r$).

Because the agent never needs encouragement to continue when the task quality is high, the non-trivial design problem concerns only the low-quality task. We therefore set $\zeta =1$ when the task quality is high and focus entirely on determining the optimal distribution of $\zeta$ for the low-quality task. Let $G$ denote this distribution of $\zeta$ conditional on low quality. The principal's expected payoff is then
$$W(G)\equiv\mathbb{E}_{\bm\sigma}\fkh{w\kh{q}\cdot \zeta}= \mu w(H) + (1-\mu) w(L) \mathbb{E}_G [\zeta].$$
Since the first term is constant and $w(L) > 0$,  the principal's problem reduces to maximizing $\mathbb{E}_G [\zeta]$. 

The agent's expected payoff is 
$$V(G)\equiv\mathbb{E}_{\bm\sigma}\fkh{v\kh{q}\cdot \phi(\zeta)}= \mu v(H) + (1-\mu) v(L) \mathbb{E}_G\fkh{\phi(\zeta)}.$$
The principal's optimization requires individual rationality $V(G)\ge 0$. Since $v(L) < 0$, we can rewrite this as
\eqns{\mathbb{E}_G\fkh{\phi(\zeta)}\le\frac{\mu v\kh{H}}{\kh{1-\mu}\fkh{-v\kh{L}}}\equiv C_1.}

To find the optimal information policy, we adopt the following approach: we first solve a relaxed problem that temporarily omits all interim incentive compatibility constraints,
\eqn{\begin{split}\max\limits_{G\in\Delta\kh{\fkh{0,1}}}\quad& \mathbb{E}_G [\zeta],\\
\text{s.t.}\quad\quad & \mathbb{E}_G\fkh{\phi(\zeta)}\le C_1,\\
\end{split}\label{eqn:relaxprog}}
where the principal chooses the optimal distribution $G$ subject only to individual rationality. We then verify that the solution to \eqref{eqn:relaxprog} can be implemented by an information policy, thereby establishing its optimality.

If $r_p<r$, then $\rho< 1$ and $\phi(z)$ is strictly concave. By Jensen's inequality, any interior point $\zeta \in (0, 1)$ yields a higher $\phi(\zeta)$ than the equivalent mean-preserving combination of $0$ and $1$, so replacing interior points with such randomizations relaxes the constraint in \eqref{eqn:relaxprog}. Therefore, the optimal distribution  $G$ must be bang-bang.  This is precisely the structure of KG: $\zeta\in\{0, 1\}$ with probabilities that exactly bind individual rationality, implying that KG is optimal among all dynamic information policies.

If $r_p>r$, then $\rho>1$ and $\phi(z)$ is strictly convex. By Jensen's inequality, $\mathbb{E}_G [\phi(\zeta)] \ge \phi(\mathbb{E}_G [\zeta])$, and equality holds if and only if $G$ is degenerate  at a specific point. It follows that $\mathbb{E}_G [\zeta]\le \phi^{-1}\kh{\mathbb{E}_G [\phi(\zeta)] }\le \phi^{-1}\kh{C_1}$, with equality if and only if $G$  is degenerate at $z=\phi^{-1}\kh{C_1}$.  This is precisely the structure of MDD: a deterministic value $\zeta=z$ that exactly binds individual rationality, implying that MDD is optimal among all dynamic information policies.

If  $r_p=r$, then $\rho=1$ and $\phi(z)=z$, making the objective and constraint in \eqref{eqn:relaxprog} identical; thus any policy sufficient to motivate the agent while exactly binding individual rationality is optimal, including both KG and MDD.
\end{proof}

\subsection{Proofs for Section \ref{sec:gen}}\label{app:proofgen}
As discussed in the main text, because we assume that the task quality follows a full-support and atomless distribution, it is without loss to assume that each $\tau\kh{q}$ is pure; that is, $\tau\kh{q}$ assigns probability one to a deterministic quitting time.

Adopting the same change of variables as in the proof of Proposition \ref{prop:opt}, we define  $\zeta(q)\equiv 1 - e^{-(r_p + \lambda)\tau(q)}\in[0,1]$, where $\zeta = 0$ corresponds to immediate quitting  ($\tau = 0$) and $\zeta=1$ corresponds to never quitting ($\tau=\infty$). With this transformation, the principal's and agent's values become
\eqns{W(\zeta) &= \mathbb{E}_{\bm\sigma}[w(\tau,q)] = \int_{0}^{\infty} w(q)\zeta(q) \, dF(q),\\
V(\zeta) &= \mathbb{E}_{\bm\sigma}[v(\tau,q)] = \int_{0}^{\infty} v(q)\phi(\zeta(q)) \, dF(q),}
where $\phi(z)\equiv 1-(1-z)^{1/\rho}$ and $\rho\equiv \frac{r_p+\lambda}{r+\lambda}$ as before.

To find the optimal information policy, we adopt the following approach: we first solve a relaxed problem that temporarily omits all interim incentive compatibility constraints,
\eqn{\begin{split}\max\limits_{\zeta\in:\kh{0,\infty}\to\fkh{0,1}}\quad& W(\zeta),\\
\text{s.t.}\quad\,\,\,\,\quad &V(\zeta)\ge 0,\\
\end{split}\label{eqn:relaxprog1}}
where the principal chooses the optimal $\zeta$ subject only to individual rationality. We then verify that the solution to \eqref{eqn:relaxprog1} can be implemented by an information policy, thereby establishing its optimality.

We set up the Lagrangian functional $\mathcal{L}$ with a multiplier $\eta \ge 0$ for the individual rationality constraint: $$\mathcal{L}\kh{\zeta, \eta} \equiv\int_0^\infty \left( w(q) \zeta(q) + \eta  v(q) \phi(\zeta(q)) \right)\,dF(q).$$
By separability, we can maximize the Lagrangian pointwise for each $q$:
\eqn{\max_{\zeta\in[0,1]}H\kh{z,q,\eta}\equiv w(q) z+ \eta  v(q) \phi(z).\label{eqn:pointwise}}
For an individually rational task $q\ge \overline{q}$ where $v(q) \ge 0$, both terms are increasing in $z$, so $z=1$ is optimal. Henceforth we focus on tasks that are not individually rational,  $q\in\kh{0,\overline{q}}$, where we can write $H\kh{z,q,\eta}=w(q) z- \eta \abs{v\kh{q}} \phi(z)$.

\begin{proof}[Proof of Proposition~\ref{prop:gKG}]
Suppose $r_p\le r$, then $\rho\le1$ and $\phi(z)$ is concave, implying that $H\kh{z,q,\eta}=w(q) z- \eta \abs{v\kh{q}} \phi(z)$ is convex in $z$ for all $q\in\kh{0,\overline{q}}$. Since a convex function on $[0,1]$ is maximized at the boundaries, the optimal solution is bang-bang: either $z=0$ or $z=1$. As a result,
$$\max_{z\in[0,1]}H\kh{z,q,\eta}=\max\hkh{w(q)- \eta \abs{v\kh{q}},0}.$$
Since $b\kh{q}$ is nondecreasing and  $\abs{v\kh{q}}$  is strictly decreasing, $w\kh{q}=\frac{\lambda{b\kh{q}}}{r_p+\lambda}$ is nondecreasing and the difference $w(q)- \eta \abs{v\kh{q}}$ is strictly increasing. Define $y\equiv\inf\hkh{q\in\kh{0,\overline{q}}:w(q)\ge \eta \abs{v\kh{q}}}$. It follows that $\max\hkh{w(q)- \eta \abs{v\kh{q}},0}=\kh{w(q)- \eta \abs{v\kh{q}}}\cdot \indic{q\ge y}$. Thus, the optimal $\zeta$ to \eqref{eqn:relaxprog1} is the step function $\zeta(q)=\indic{q\ge y}$. This is precisely the structure of KG: $\zeta(q)\in\{0, 1\}$ with a cutoff $y$ that exactly binds individual rationality, implying that KG is optimal among all dynamic information policies.
\end{proof}

\begin{proof}[Proof of Lemma~\ref{lem:DGD}]
Denote by $\tau$ the agent's quitting schedule under a cutoff cascade policy $\bm{\sigma}$ with cutoff quality function $\tilde{q}\kh{s}$. At any instant $s\ge 0$, if the task is incomplete and the agent has not been informed to quit, he infers that  $q\ge \tilde{q}\kh{s}$.  Thus, $\tau\kh{q}$ is the inverse of $\tilde{q}\kh{s}$:
$$q=\tilde{q}\kh{s}\quad\iff\quad s=\tau\kh{q}.$$
If the individual rationality constraint is violated, i.e., $\mathbb{E}_{\bm\sigma}[v(\tau,q)]<0$,  the agent ignores all information and quits immediately. Therefore, individual rationality is necessary for the agent to follow $\bm{\sigma}$.

Now suppose $\mathbb{E}_{\bm\sigma}[v(\tau,q)]\ge 0$.  We show that the agent's best response is to follow $\bm{\sigma}$ at any time $s\ge 0$ while the task remains incomplete. If the agent is informed that $q<\tilde{q}\kh{s}$, his expected continuation value is bounded above by $v\kh{\tilde{q}\kh{s}}\le v\kh{\overline{q}}=0$, so quitting is optimal. If the agent is informed that $q\ge\tilde{q}\kh{s}$, his expected continuation value promised by $\bm\sigma$ is $\mathbb{E}_{\bm\sigma}[v(\tau-s, q) \mid q \ge \tilde{q}(s)]$. We need to show that this value is nonnegative, thereby establishing that it is optimal for the agent to continue when he is informed to do so.

Define 
$$h_1(s) \equiv \mathbb{E}_{\bm\sigma}[v(\tau-s,q) \mid q\ge \tilde{q}(s)] \cdot \mathbb{P}_{\bm\sigma}\left(q\ge \tilde{q}(s)\right) = \int_{\tilde{q}(s)}^{\infty} v(\tau(q)-s,q)\,dF(q).$$
Since $\tau$ is the inverse of $\tilde{q}$, we have $\tau\kh{\tilde{q}\kh{s}}=s$. Using the Leibniz integral rule and the fact that $v(0, q)=0$, we obtain
$$h_1'(s) = \int_{\tilde{q}(s)}^{\infty} \left( -\frac{\partial v}{\partial t}(\tau(q)-s,q) \right) \,dF(q).$$
Since $v(t, q)=\left(1-e^{-(r+\lambda) t}\right) v(q)$, we have $\frac{\partial v}{\partial t}(t, q)=(r+\lambda) e^{-(r+\lambda) t} v(q)$. For $q\le\overline{q}$, we have $v\kh{q}\le 0$, which implies that $-\frac{\p v}{\p t}\kh{\tau\kh{q}-s,q}\ge 0$. For $q>\overline{q}$,  the agent is never asked to quit (as $\tilde{q}(s) \in [0, \overline{q}]$), so $\tau(q)=\infty$ and consequently $-\frac{\partial v}{\partial t}(\tau(q)-s, q) = 0$. Combining both cases, we have $-\frac{\p v}{\p t}\kh{\tau\kh{q}-s,q}\ge 0$ for all $q$, implying that $h_1'(s) \ge 0$. Since $h(0) = \mathbb{E}_{\bm\sigma}[v(\tau, q)] \ge 0$ by assumption, we conclude that $h_1(s) \ge 0$ for all $s \ge 0$. This completes the proof that the agent will follow $\bm{\sigma}$ at each time $s \ge 0$.
\end{proof}

\begin{proof}[Proof of Propositions \ref{prop:gstar} and \ref{prop:gFDD}]
When the principal is less patient than the agent ($r_p > r$), we have $\rho > 1$ and $\phi(z)$ is strictly convex. This implies that $H\kh{z,q,\eta}=w(q) z- \eta \abs{v\kh{q}} \phi(z)$ is strictly concave in $z$ for all $q\in\kh{0,\overline{q}}$. Any interior optimal solution to \eqref{eqn:pointwise} is determined by the first-order condition:
 $$\frac{\partial H}{\partial z} = w(q) - \eta \abs{v\kh{q}} \phi'(z) = 0.$$
Substituting $\phi'(z)=\frac{1}{\rho} (1-z)^{1/\rho -1}$ and rearranging yields
\begin{equation}
(1 - \zeta(q))^{1/\rho -1} = \frac{\rho w(q)}{\eta |v(q)|}=\frac{\rho}{\eta u(q)}.\label{eqn:FOC}
\end{equation}
We now examine the boundary conditions for the interior solution. Since $1 - \zeta(q) \in[0,1]$  and  $1/\rho -1<0$, the left-hand side of equation \eqref{eqn:FOC} is bounded below by $1$, which requires that $\rho\ge \eta u(q)$. Conversely, if $\rho< \eta u(q)$, the first-order condition would imply $\zeta(q)<0$.
The KKT conditions then imply the corner solution $\zeta(q) = 0$.

As $q$ increases from zero to $\overline{q}$, the ratio $u\kh{q}$ is strictly decreasing and ranges from the positive value $u\kh{0}$ to zero. Consequently, the analysis distinguishes between two cases, depending on whether $\rho \ge \eta u(0)$.
\\\\\noindent \textbf{Case 1:} If $\rho < \eta u(0)$,  then by the intermediate value theorem, there exists a unique $q^{**}\in (0, \overline{q})$ satisfying $\rho = \eta u(q^{**})$. For all $q < q^{**}$, we have $\rho < \eta u(q)$, so $\zeta(q) = 0$. For all $q \in \left[q^{**}, \overline{q}\right)$, the solution is interior and characterized by equation \eqref{eqn:FOC}.

Substituting $\zeta(q)=1 - e^{-(r_p + \lambda)\tau(q)}$ and $\rho= \frac{r_p+\lambda}{r+\lambda}$ into equation \eqref{eqn:FOC}  and simplifying, we obtain
$$e^{\left(r_p-r\right)\tau(q)}=\frac{\rho}{\eta u(q)}=\frac{u(q^{**})}{u(q)}\quad\implies\quad\tau(q)=\frac{1}{r_p-r}\log\kh{\frac{u\kh{q^{**}}}{u\kh{q}}}.$$
This is precisely the structure of OIGD: a cutoff cascade policy $\operatorname{IGD}\kh{q^{**}}$ with initial cutoff quality $q^{**}$ that exactly binds individual rationality. By Lemma \ref{lem:DGD}, the quitting time $\tau^{\operatorname{OIGD}}$ is indeed induced by the information policy OIGD, which establishes that OIGD is optimal among all dynamic information policies. This completes the proof of Proposition \ref{prop:gstar}.
\\\\\noindent \textbf{Case 2:} If $\rho \ge \eta u(0)$, the monotonicity of $u(q)$ ensures that $\rho \ge \eta u(q)$ for all $q \in (0, \overline{q})$. Consequently, the solution is interior for all $q \in (0, \overline{q})$ and characterized by equation \eqref{eqn:FOC}.

Substituting $\zeta(q)=1 - e^{-(r_p + \lambda)\tau(q)}$ and $\rho= \frac{r_p+\lambda}{r+\lambda}$ into equation \eqref{eqn:FOC}  and simplifying, we obtain
$$e^{\left(r_p-r\right)\tau(q)}=\frac{\rho}{\eta u(q)}=\frac{\rho}{\eta u(0)}\cdot\frac{u(0)}{u(q)}\quad\implies\quad\tau(q)=\frac{1}{r_p-r}\left(\log\kh{\frac{\rho}{\eta u(0)}}+\log\kh{\frac{u\kh{0}}{u\kh{q}}}\right).$$
Let $\tilde{t}\equiv\frac{1}{r_p-r}\log\kh{\frac{\rho}{\eta u(0)}}$. This is precisely the structure of MDGD: a cutoff cascade policy $\operatorname{DGD}\kh{\tilde{t}}$ with delay time $\tilde{t}$ that exactly binds individual rationality. By Lemma \ref{lem:DGD}, the quitting time $\tau^{\operatorname{MDGD}}$ is indeed induced by the information policy MDGD, which establishes that MDGD is optimal among all dynamic information policies. This completes the proof of Proposition \ref{prop:gFDD}.
\end{proof}

\subsection{Proofs for Section \ref{sec:Poi}}\label{app:Poi}
\begin{proof}[Proof of Lemma \ref{lem:tbar}]
Let $\tau$ be the agent's planned quitting time. His expected payoff $h(\tau)$ is given by $h_2(\tau) \equiv \mu v(\tau,H) + (1-\mu)v(\tau,L).$ Substituting the functional forms \eqref{eqn:v} leads to:
$$h_2(\tau) = \mu\left(1-e^{-(r+\lambda_{H}) \tau}\right)v(H) + (1-\mu)\left(1-e^{-(r+\lambda_{L}) \tau}\right)v(L).$$
To find the optimal $\tau$, consider the first-order derivative: \eqn{h_2'(\tau) = \mu (r+\lambda_H) e^{-(r+\lambda_{H}) \tau} v(H) + (1-\mu) (r+\lambda_L) e^{-(r+\lambda_{L}) \tau} v(L).\label{eqn:hprime}}

First assume $\mu\le\hat{\mu}$.  Since $\lambda_H>\lambda_L$, we have $e^{-\left(r+\lambda_{H}\right) \tau}<e^{-\left(r+\lambda_{L}\right) \tau}$  for any $\tau>0$. We can therefore bound the derivative $h_2'(\tau)$ as follows:
\eqns{h_2'\kh{\tau}&<\mu e^{-\left(r+\lambda_{L}\right) \tau}\kh{r+\lambda_H}v\kh{H}+\kh{1-\mu}e^{-\left(r+\lambda_{L}\right) \tau} \kh{r+\lambda_L}v\kh{L}\\
&=e^{-\left(r+\lambda_{L}\right) \tau}\kh{\mu\kh{r+\lambda_H}v\kh{H}+\kh{1-\mu}\kh{r+\lambda_L}v\kh{L}}\le 0,}
where the final inequality holds due to the definition of $\hat{\mu}$ in equation \eqref{eqn:muhat}. This shows that $h_2\kh{\tau}$ is strictly decreasing in $\tau$, and hence the optimal quitting time is $\tau=0$.

Now assume $\mu>\hat{\mu}$. We argue that $h_2\kh{\tau}$ attains its maximum at $\tau=\overline{t}\kh{\mu}$. Using the expression for $h_2'(\tau)$ in equation \eqref{eqn:hprime} and the definition of $\hat{\mu}$ in equation \eqref{eqn:muhat}, we have:
\eqns{h_2'\kh{\tau}>0\quad\iff\quad& e^{\kh{\lambda_H-\lambda_L}\tau}<\frac{\mu\kh{r+\lambda_H} v\kh{H}}{\kh{1-\mu}\kh{r+\lambda_L}\fkh{-v\kh{L}}}=\frac{\mu}{1-\mu}\cdot\frac{1-\hat{\mu}}{\hat{\mu}}\\
\quad\iff\quad&\tau< \frac{1}{\lambda_{H}-\lambda_{L}}\left(\log \frac{\mu}{1-\mu}-\log \frac{\hat{\mu}}{1-\hat{\mu}}\right)=\overline{t}\kh{\mu},}
where the last equality follows from the definition of $\overline{t}\kh{\mu}$ in equation \eqref{eqn:tbar}.
Therefore, $h_2(\tau)$ is strictly increasing on $\kh{0,\overline{t}\kh{\mu}}$, attains its maximum at $\tau=\overline{t}\kh{\mu}$, and is strictly decreasing on $\kh{\overline{t}\kh{\mu},\infty}$. The optimal strategy for the agent is to keep working until time $\overline{t}(\mu)$ and then quit if the task is not yet completed.
\end{proof}

\begin{proof}[Proof of Lemma \ref{lem:free}]
For any information policy $\bm{\sigma}$, let $\alpha$ denote the CDF of $\tau\kh{H}$, the agent's quitting time conditional on the task being high quality, and let $\beta$ denote the CDF of $\tau\kh{L}$. 

Assume $\mu>\hat{\mu}$. Consider an optimal information policy $\bm{\sigma}$. Since the agent never needs encouragement to continue when the task quality is high, we have $\alpha\kh{t}=0$ for all $t\in[0,\infty)$. To complete the proof, it remains to show that  $\beta\kh{t}=0$ for all $t\in\left[0,\overline{t}\kh{\mu}\right)$.

Suppose to the contrary that $\beta\kh{t}>0$ for some  $t\in\left[0,\overline{t}\kh{\mu}\right)$. Consider the perturbation $\tilde{\beta}$ given by
$$\tilde{\beta}\kh{t}=\Brace{&0,&&t\in\left[0,\overline{t}\kh{\mu}\right),\\
&\beta\kh{t},&&t\in\left[\overline{t}\kh{\mu},\infty\right).}$$
Note that $\tilde{\beta}$ first-order stochastically dominates $\beta$, as it shifts the mass accumulated on $\left[0,\overline{t}\kh{\mu}\right)$ to a discrete jump at $t=\overline{t}\kh{\mu}$. The principal's payoff can be written as:
$$W(\beta)\equiv\mathbb{E}_{\bm\sigma}\fkh{w\kh{\tau,q}}=\mu w(H) + (1-\mu)w(L)\int_{0}^\infty\left(1-e^{-(r_{p}+\lambda_L) t}\right) \,d\beta(t).$$
Since the integrand $1-e^{-(r_p+\lambda_L)t}$ is strictly increasing in $t$,  the principal strictly prefers $\tilde{\beta}$ to $\beta$.  To complete the contradiction, it remains only to verify that $\tilde{\beta}$ satisfies the agent's incentive compatibility constraints.

For any time $t< \overline{t}\kh{\mu}$, under $\tilde{\beta}$, the principal provides no information. The agent's belief $\mu_t$ follows the natural decay path given by equation \eqref{eqn:mu_t}, and by definition of $\overline{t}\kh{\mu}$ in equation \eqref{eqn:tbar}, we have $\mu_t>\hat{\mu}$, so the agent is willing to work at all $t< \overline{t}\kh{\mu}$. At any time $t\ge \overline{t}\kh{\mu}$,  the conditional distribution of quitting times under $\tilde{\beta}$ is identical to that under $\beta$. Since $\beta$ was incentive compatible, the constraints at $t \ge \overline{t}\kh{\mu}$ remain satisfied. This contradicts the optimality of  $\beta$,  completing the proof.
\end{proof}

\begin{proof}[Proof of Proposition \ref{prop:optlearn}]
For $\mu> \hat{\mu}$, optimality follows directly from the Free-Riding Lemma (Lemma \ref{lem:free}). We therefore assume $\mu\le \hat{\mu}$ for the remainder of the proof.

Since the agent never needs encouragement to continue when the task quality is high, the non-trivial design problem concerns only the low-quality task. Adopting the same change of variables as in the proof of Proposition \ref{prop:opt}, we define  $\zeta\equiv 1 - e^{-(r_p + \lambda_L)\tau(L)}\in[0,1]$, where $\zeta = 0$ corresponds to immediate quitting  ($\tau(L) = 0$) and $\zeta=1$ corresponds to never quitting ($\tau(L)=\infty$),  and let $G$ denote this distribution of $\zeta$. The principal's expected payoff is then
$$W(G) = \mathbb{E}_{\bm\sigma}[w(\tau,q)] =\mu w(H) + (1-\mu) w(L) \mathbb{E}_G [\zeta].$$
Since the first term is constant and $w(L) > 0$,  the principal's problem reduces to maximizing $\mathbb{E}_G [\zeta]$. 

The agent's expected payoff is 
$$V(G) = \mathbb{E}_{\bm\sigma}[v(\tau,q)] =\mu v(H) + (1-\mu) v(L) \mathbb{E}_G\fkh{\phi(\zeta)},$$
where $\phi(z)\equiv 1-(1-z)^{1/\rho}$ and $\rho\equiv \frac{r_p+\lambda_L}{r+\lambda_L}$. Since $v(L) < 0$, the individual rationality constraint $V(G)\ge 0$ rearranges to $\mathbb{E}_G\fkh{\phi(\zeta)}\le C_1$ where $C_1\equiv \frac{\mu v\kh{H}}{\kh{1-\mu}\fkh{-v\kh{L}}}$.

Next, we simplify the interim incentive compatibility constraints. At any time $s>0$, the agent's continuation value under $\bm\sigma$ can be expanded using the law of iterated expectations, conditioning on the realization of task quality $q$:
\begin{align*}
\mathbb{E}_{\bm\sigma}\fkh{v\kh{\tau-s,q}\mid x>s,\tau>s} &= \underbrace{\mathbb{P}_{\bm{\sigma}}\kh{q=H \mid x>s, \tau>s}}_{\mu_s} \cdot \mathbb{E}_{\bm\sigma}\fkh{v\kh{\tau-s,H} \mid x>s, \tau>s, q=H} \\
&\quad + \underbrace{\mathbb{P}_{\bm{\sigma}}\kh{q=L  \mid  x>s, \tau>s}}_{1-\mu_s} \cdot \mathbb{E}_{\bm\sigma}\fkh{v\kh{\tau-s,L} \mid x>s, \tau>s, q=L},
\end{align*}
where the posterior belief is $\mu_s$ is given by
\eqn{\mu_s\equiv  \mathbb{P}_{\bm{\sigma}}\kh{q=H  \mid  x>s, \tau>s} = \frac{\mu }{\mu+ (1-\mu) e^{\kh{\lambda_H-\lambda_L }s} \mathbb{P}_{\bm\sigma}\kh{\tau>s \mid q=L}}.\label{eqn:mus}}
Equation \eqref{eqn:mus} is derived from Bayes' rule and nests equation \eqref{eqn:mu_t} as the special case with no disclosure. 

When the task quality is high, the agent never quits, so $\mathbb{E}_{\bm\sigma}\fkh{v\kh{\tau-s,H} \mid x>s, \tau>s, q=H}=v(H)$. When the task quality is low, the event $\tau>s$ corresponds to $\zeta>z_s$ in the transformed domain, where $z_s\equiv 1 - e^{-(r_p + \lambda_L)s}$. It follow that $\mathbb{P}_{\bm\sigma}\kh{\tau>s \mid q=L}=\mathbb{P}_{G}\kh{\zeta>z_s}$. The agent's continuation value at time $s$, $v\kh{\tau-s,L}$,  is proportional to $1 - e^{-(r+\lambda_L)(\tau-s)}$, which can be expressed as 
$$1 - e^{-(r+\lambda_L)(\tau-s)}=1 - \frac{e^{-(r+\lambda_L)\tau}}{e^{-(r+\lambda_L)s}} =1 - \frac{1-\phi(\zeta)}{1-\phi(z_s)}=\frac{\phi(\zeta) - \phi(z_s)}{1-\phi(z_s)}.$$ Therefore, $$\mathbb{E}_{\bm\sigma}\fkh{v\kh{\tau-s,L} \mid x>s, \tau>s, q=L} = v(L) \frac{\mathbb{E}_G\fkh{\phi(\zeta) - \phi(z_s)\mid \zeta>z_s}}{1-\phi(z_s)}.$$

Substituting these values and the belief $\mu_s$ from equation \eqref{eqn:mus} into the interim incentive compatibility constraint at time $s>0$,  $\mathbb{E}_{\bm\sigma}\fkh{v\kh{\tau-s,q}\mid x>s,\tau>s}\ge 0$, and clearing the strictly positive denominator, yields:
$$\mu v(H) + (1-\mu) e^{(\lambda_H-\lambda_L)s} \mathbb{P}_{G}\kh{\zeta>z_s} v(L)\frac{\mathbb{E}_G\fkh{\phi(\zeta) - \phi(z_s)\mid \zeta>z_s}}{1-\phi(z_s)} \ge 0.$$
Combining the probability $\mathbb{P}_{G}(\zeta>z_s)$ with the conditional expectation:
$$\mu v(H) + (1-\mu) e^{(\lambda_H-\lambda_L)s} v(L) \frac{\mathbb{E}_G\fkh{\left(\phi(\zeta) - \phi(z_s)\right)\indic{\zeta>z_s}}}{1-\phi(z_s)} \ge 0.$$
Since $v(L) < 0$, rearranging the inequality to isolate the expectation term yields:
\eqns{\mathbb{E}_G\fkh{\left(\phi(\zeta) - \phi(z_s)\right)\indic{\zeta>z_s}}\le \frac{\mu v\kh{H}}{\kh{1-\mu}\fkh{-v\kh{L}}}\cdot e^{-(\lambda_H-\lambda_L)s}\kh{1-\phi(z_s)}=C_1e^{-(\lambda_H-\lambda_L)s}\kh{1-\phi(z_s)}.}
Expressing the right-hand side in terms of $z_s$ using the identities $e^{-(\lambda_H-\lambda_L)s} = \left( e^{-(r_p+\lambda_L)s} \right)^{\frac{\lambda_H-\lambda_L}{r_p+\lambda_L}} = (1-z_s)^{\frac{\lambda_H-\lambda_L}{r_p+\lambda_L}}$ and $1-\phi(z_s) = (1-z_s)^{1/\rho} = (1-z_s)^{\frac{r+\lambda_L}{r_p+\lambda_L}}$:
$$\mathbb{E}_G\fkh{\left(\phi(\zeta) - \phi(z_s)\right)\indic{\zeta>z_s}}\le C_1(1-z_s)^{\frac{\lambda_H-\lambda_L}{r_p+\lambda_L}}(1-z_s)^{\frac{r+\lambda_L}{r_p+\lambda_L}} =C_1(1-z_s)^{\frac{r+\lambda_H}{r_p+\lambda_L}}.$$
The transformation $z_s = 1 - e^{-(r_p+\lambda_L)s}$ maps $s \in (0,\infty)$ bijectively to $z \in (0,1)$; thus, satisfying the constraint for all $s > 0$ is equivalent to satisfying it for all $z \in (0,1)$. The boundary condition at $z = 0$ represents the individual rationality constraint $\mathbb{E}_G\left[\phi(\zeta)\right] \le C_1$, while the condition at $z = 1$ is satisfied trivially since $\zeta \le 1$. Therefore, to find the optimal information policy, we can rewrite the principal's program as
\eqn{\begin{split}\max\limits_{G\in\Delta\kh{\fkh{0,1}}}\quad& \mathbb{E}_G [\zeta],\\
\text{s.t.}\quad\quad &\mathbb{E}_G\fkh{\left(\phi(\zeta) - \phi(z)\right)\indic{\zeta>z}}\le C_1(1-z)^{\frac{r+\lambda_H}{r_p+\lambda_L}},\quad \forall z\in[0,1],\\
\end{split}\label{eqn:reformprogl}}

If $r_p<r$, then $\rho< 1$ and $\phi(z)$ is strictly concave. Let 
$$h_3(z) \equiv \mathbb{E}_G\fkh{\left(\phi(\zeta) - \phi(z)\right)\indic{\zeta>z}} = \int_z^1 \left( \phi(\zeta) - \phi(z) \right)\,dG(\zeta).$$
By integration by parts,
$$h_3(z) = \left.(\phi(\zeta) - \phi(z))(G(\zeta) - 1) \right|_{\zeta=z}^{\zeta=1} - \int_z^1 (G(\zeta) - 1) \phi'(\zeta)\,d\zeta =\int_z^1 (1-G(\zeta)) \phi'(\zeta)\,d\zeta.$$
Differentiating with respect to $z$ using the Leibniz integral rule yields
\begin{equation}
h_3'(z) = -(1-G(z)) \phi'(z).\label{eqn:uprime}
\end{equation}
The principal's problem \eqref{eqn:reformprogl} is to maximize $\mathbb{E}_G[\zeta] = \int_0^1 (1-G(z)) \,dz$ subject to $h_3(z) \le C_1(1-z)^{\frac{r+\lambda_H}{r_p+\lambda_L}}$ for all $z\in[0,1]$. Using equation \eqref{eqn:uprime}, we can express the objective in terms of $h_3(z)$: $$\int_0^1 (1-G(z)) dz = \int_0^1 \frac{-h_3'(z)}{\phi'(z)}\,dz.$$
Apply integration by parts again, and note that $h_3(1)=0$:
$$\mathbb{E}_G[\zeta]=\int_0^1 \frac{-h_3'(z)}{\phi'(z)}\,dz = \left.\frac{-h_3(z)}{\phi'(z)} \right|_0^1 - \int_0^1 h_3(z) \frac{d}{dz}\left( \frac{-1}{\phi'(z)} \right)\,dz=\frac{h_3(0)}{\phi'(0)} + \int_0^1 h_3(z){\left(\frac{-\phi''(z)}{\phi'(z)^2} \right)}\,dz.$$
Since $\phi(z)$ is strictly concave, $-\phi''(z)>0$, so the objective in \eqref{eqn:reformprogl} is strictly increasing in $h_3(z)$ pointwise. To maximize the objective, the principal must set $h_3(z)$ to its maximum feasible value, which means the constraint binds everywhere: $h_3(z) = C_1(1-z)^{\frac{r+\lambda_H}{r_p+\lambda_L}}$ for all $z\in[0,1]$. Since $\phi^{\prime}(z)= \frac{1}{\rho}(1-z)^{1/\rho -1}= \frac{r+\lambda_L}{r_p+\lambda_L} (1-z)^{\frac{r+\lambda_L}{r_p+\lambda_L}-1}$, we can solve for $1-G(z)$ using equation \eqref{eqn:uprime}:
$$1-G(z)=\frac{-h_3'(z)}{\phi'(z)}=\frac{C_1\cdot\frac{r+\lambda_H}{r_p+\lambda_L} \cdot(1-z)^{\frac{r+\lambda_H}{r_p+\lambda_L}-1}}{\frac{r+\lambda_L}{r_p+\lambda_L}\cdot(1-z)^{\frac{r+\lambda_L}{r_p+\lambda_L}-1}}=C_{1} \cdot \frac{r+\lambda_{H}}{r+\lambda_{L}} \cdot(1-z)^{\frac{\lambda_{H}-\lambda_{L}}{r_{p}+\lambda_{L}}}.$$
From the definition of $\hat{\mu}$ in equation \eqref{eqn:muhat}, the condition $\mu\le \hat{\mu}$ is equivalent to $C_1=\frac{\mu v\kh{H}}{\kh{1-\mu}\fkh{-v\kh{L}}}\le \frac{r+\lambda_L}{r+\lambda_H}$.
This ensures that $1-G(0) \le 1$, confirming that the solution constitutes a valid probability distribution.
Substituting $\zeta=1 - e^{-(r_p + \lambda_L)\tau}$, which implies that $(1-\zeta)^{\frac{\lambda_{H}-\lambda_{L}}{r_{p}+\lambda_{L}}}=e^{-(\lambda_{H}-\lambda_{L})\tau}$, yields
$$\mathbb{P}_{\bm\sigma}\kh{\tau>s \mid q=L}=C_{1} \cdot \frac{r+\lambda_{H}}{r+\lambda_{L}} \cdot e^{-(\lambda_{H}-\lambda_{L})s}$$
This is precisely the structure of IPD: immediate disclosure with probability $1-C_{1} \cdot \frac{r+\lambda_{H}}{r+\lambda_{L}}$ followed by Poisson disclosure at rate $\lambda_H-\lambda_L$, implying that IPD is optimal among all dynamic information policies.

If $r_p>r$, then $\rho>1$ and $\phi(z)$ is strictly convex. By Jensen's inequality, $\mathbb{E}_G [\phi(z)] \ge \phi(\mathbb{E}_G [z])$, and equality holds if and only if $G$ is degenerate  at a specific point. It follows that $\mathbb{E}_G [z]\le \phi^{-1}\kh{\mathbb{E}_G [\phi(z)] }\le \phi^{-1}\kh{C_1}$, with equality if and only if $G$  is degenerate at $z^*=\phi^{-1}\kh{C_1}$.  This is precisely the structure of MDD: a deterministic value $\zeta=z^*$ that exactly binds individual rationality. It remains to verify that the agent will follow the recommendations given by MDD; that is, the constraints in \eqref{eqn:reformprogl} hold for all $z$. For $z\ge z^*$, we have $\indic{\zeta>z}=0$, so the left-hand side equals zero and the constraint is trivially satisfied. For $z<z^*$, the constraint becomes: $\phi(z^*) - \phi(z)\le C_1(1-z)^{\frac{r+\lambda_H}{r_p+\lambda_L}}$. Since $\phi(z^*)=C_1$, this is equivalent to $ \phi(z)\ge C_1\kh{1-(1-z)^{\frac{r+\lambda_H}{r_p+\lambda_L}}}$. Note that $\phi(z)=1-(1-z)^{1/\rho}$,  so the inequality holds trivially at $z=0$. Let $\kappa\equiv \frac{r_p+\lambda_L}{r+\lambda_H}$. For $z\in(0,z^*)$, it suffices to show
\eqn{h_4(z)\equiv\frac{1-(1-z)^{1/\rho}}{1-(1-z)^{1/\kappa } } \ge C_1.\label{eqn:h4}}
Since $\lambda_H>\lambda_L$, the exponents satisfy $1/\rho< 1/\kappa$. We first show that $h_4(z)$ is strictly increasing. Differentiating with respect to $z$ yields:$$h_4'(z) = \frac{\frac{1}{\rho}(1-z)^{1/\rho-1}\left(1-(1-z)^{1/\kappa}\right) - \frac{1}{\kappa}(1-z)^{1/\kappa-1}\left(1-(1-z)^{1/\rho}\right)}{\left(1-(1-z)^{1/\kappa}\right)^2}.$$
The sign of $h_4'(z)$ is determined by the numerator. Factoring out $(1-z)^{1/\rho-1}$, we define the remaining term:
\eqns{h_5(z) &\equiv \frac{1}{\rho}\left[1-(1-z)^{1/\kappa}\right] - \frac{1}{\kappa}(1-z)^{1/\kappa-1/\rho}\left[1-(1-z)^{1/\rho}\right]\\
&=\frac{1}{\rho} - \frac{1}{\kappa}(1-z)^{1/\kappa-1/\rho} +\left(\frac{1}{\kappa} - \frac{1}{\rho}\right)(1-z)^{1/\kappa}.}
We have $h_5(0)=\frac{1}{\rho} - \frac{1}{\kappa} +\left(\frac{1}{\kappa} - \frac{1}{\rho}\right)=0$, and
\eqns{h_5'(z) &= \frac{1}{\kappa}\left(\frac{1}{\kappa}-\frac{1}{\rho}\right)(1-z)^{1/\kappa-1/\rho-1} - \left(\frac{1}{\kappa} - \frac{1}{\rho}\right)\frac{1}{\kappa}(1-z)^{1/\kappa-1}\\
&= \frac{1}{\kappa}\left(\frac{1}{\kappa}-\frac{1}{\rho}\right)(1-z)^{1/\kappa-1/\rho-1} \left(1 - (1-z)^{1/\rho}\right)>0.}
Thus, $h_5(z) > 0$ for all $z \in (0, 1)$, implying $h_4'(z) > 0$. Since $h_4(z)$ is strictly increasing, the inequality \eqref{eqn:h4} is most restrictive as $z \to 0^+$. Applying l'H\^{o}pital's rule:$$\lim_{z \to 0} h_4(z) = \lim_{z \to 0} \frac{\frac{1}{\rho}(1-z)^{1/\rho-1}}{\frac{1}{\kappa}(1-z)^{1/\kappa-1}}=\frac{\kappa}{\rho} = \frac{r+\lambda_L}{r+\lambda_H}.$$Therefore, $h_4(z) \ge C_1$ holds for all $z \in (0, z^*)$ if and only if $\frac{r+\lambda_L}{r+\lambda_H} \ge C_1$. From the definition of $\hat{\mu}$ in equation \eqref{eqn:muhat}, this is equivalent to $\mu \le \hat{\mu}$, which is satisfied by assumption. Thus, the constraints in \eqref{eqn:reformprogl} hold for all $z$, confirming that the agent will follow the recommendations given by MDD. Therefore, MDD is optimal among all dynamic information policies.

If $r_p=r$, then $\rho=1$ and $\phi(z)=z$, so the objective and constraint in \eqref{eqn:reformprogl} have the same functional form. Thus, any policy that satisfies the interim incentive compatibility constraints while exactly binding the individual rationality constraint is optimal, including both IPD and MDD.
\end{proof}
\end{document}